\documentclass[a4paper,UKenglish]{lipics-v2016}

\newcommand{\shortversion}[1]{}
\newcommand{\longversion}[1]{#1}

\usepackage{microtype}
\usepackage{amsmath,amsthm}
\usepackage{amssymb}
\usepackage{hyperref,url}
\usepackage{mathtools}
\usepackage{tikz}
\usetikzlibrary{matrix}
\usepgflibrary{shapes.arrows}
\pgfrealjobname{barrier}
\usepackage{enumerate}
\usepackage{graphicx}
\usepackage{verbatim}
\usepackage{authblk}
\usepackage{xspace}

\newtheorem{obs}{Observation}

\newcommand{\set}[1]{\left\{ #1 \right\}}

\newcommand{\abs}[1]{\left| #1 \right|}
\newcommand{\etal}{\textit{et~al.}}

\newcommand{\opt}{\textsc{opt}}

\newcommand{\ccfont}[1]{\textnormal{\textsf{#1}}\xspace}

\newcommand{\Ptime}{\ccfont{P}}
\newcommand{\NP}{\ccfont{NP}}

\newcommand{\W}[1]{\ccfont{W[#1]}}

\bibliographystyle{plainurl}
\allowdisplaybreaks

\begin{document}

\title{Barrier Coverage with Non-uniform Lengths to Minimize Aggregate Movements\footnote{This work was supported by  the Australian Research Council (ARC) under the Discovery Projects funding scheme (DP150101134). Serge Gaspers is the recipient of an ARC Future Fellowship (FT140100048).}}

\author[1,2]{Serge Gaspers}
\author[3]{Joachim Gudmundsson}
\author[3]{Juli\'{a}n Mestre}
\author[1,3]{Stefan R\"{u}mmele}
\affil[1]{UNSW Sydney, Australia}
\affil[2]{Data61, CSIRO, Australia}
\affil[3]{The University of Sydney, Australia}

\Copyright{Serge Gaspers, Joachim Gudmundsson, Juli\'{a}n Mestre, and Stefan R\"{u}mmele}%
\subjclass{F.2.2 Nonnumerical Algorithms and Problems}%
\keywords{Barrier coverage, Sensor movement, Approximation, Parameterized complexity}%

\maketitle

\begin{abstract}
Given a line segment $I=[0,L]$, the so-called \emph{barrier}, and a set of $n$ sensors with varying ranges positioned on the line containing $I$, the \emph{barrier coverage} problem is to move the sensors so that they cover $I$, while minimising the total movement. In the case when all the sensors have the same radius the problem can be solved in $O(n \log n)$ time (Andrews and Wang, Algorithmica 2017). If the sensors have different radii the problem is known to be \NP-hard to approximate within a constant factor (Czyzowicz \etal, ADHOC-NOW 2009).

We strengthen this result and prove that no polynomial time $\rho^{1-\varepsilon}$-approxi\-mation algorithm exists unless $\Ptime=\NP$, where $\rho$ is the ratio between the largest radius and the smallest radius. Even when we restrict the number of sensors that are allowed to move by a parameter $k$, the problem turns out to be \W{1}-hard. On the positive side we show that a $((2+\varepsilon)\rho+2/\varepsilon)$-approximation can be computed in $O(n^3/\varepsilon^2)$ time and we prove fixed-parameter tractability when parameterized by the total movement assuming all numbers in the input are integers.
\end{abstract}

\section{Introduction}
The original motivation for the problem of covering barriers comes from intrusion detection, where the goal is to guard the boundary (barrier) of a region in the plane. In this case the barrier can be described by a polygon and the initial position of the sensors can be anywhere in the plane. The barrier coverage problem, and many of its variants, has received much attention in the wireless sensor community, see for example~\cite{AroraEtal-exscal-05,ckl-lbcws-10,kla-bcws-05} and the recent surveys~\cite{tw-sbcds-14,wgwgc-sbcs-16}. Large scale barriers with more than a thousand sensors have been experimentally tested and evaluated~\cite{AroraEtal-exscal-05}.

In a general setting of the barrier coverage problem each sensor has a fixed sensor radius and is initially placed in the plane and the cost of moving a sensor is proportional to the Euclidean distance it is moved. In this paper we consider the special case where we have $n$ sensors on the real line. Each sensor $i=1,\ldots, n$ has a location $x_i$ and a radius $r_i$. When located at $y_i$, the $i$-th sensor covers the \emph{interval} $B(y_i, r_i) = [y_i - r_i, y_i + r_i]$. The goal is to move around the sensor intervals to cover the interval $[0,L]$, the so-called \emph{barrier}. In other words, for each sensor, we need to decide its new location $y_i$ so that
$
  [0,L] \subseteq \bigcup_{i} B(y_i, r_i).
$
The cost of the solution is the sum of sensor movements:
$
  \mathrm{cost}(y) = \sum_{i} |y_i - x_i|,
$
and the objective is to find a feasible solution of minimum cost.

\begin{figure}[t]
\includegraphics[width=\textwidth]{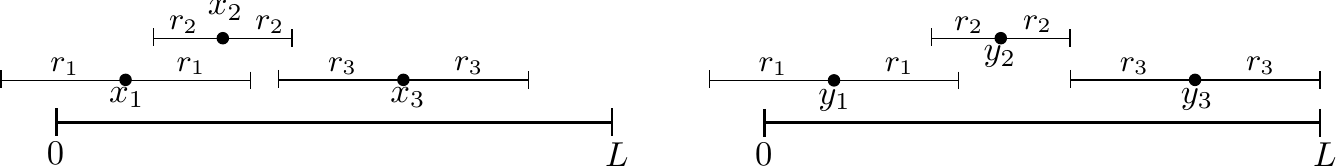}
\caption{(left) Illustrating an instance with three sensors $\{1,2,3\}$ and sensor intervals. (right) The sensors have moved such that the sensor intervals cover the barrier $[0,L]$.}
\label{fig:intro}
\end{figure}

\subsection{Our Results and Related Work}
Even though the barrier coverage problem, and many of its variants, has received a lot of attention from the wireless sensor community, not much is known from a theoretical point of view. In the literature three different optimisation criteria have been considered: minimize the sum of movements (min-sum), minimize the maximum movement (min-max) and, minimize the number of sensors that move (min-num).

Dobrev \etal~\cite{Dobrev-cbcrs-15} studied the min-sum and min-max version in the case when the sensors' start position can be anywhere in the plane and $k$ parallel barriers are required to be covered. However, they restricted the movement of the sensors to be perpendicular to the barriers. They showed an $O(kn^{k+1})$ time algorithm. If the barriers are allowed to be horizontal and vertical then the problem is \NP-complete, even for two barriers.

Most of the existing research has focussed on the special case when the barrier is a line segment $I$ and all the sensors are initially positioned on a line containing $I$.

\paragraph*{The Min-Sum model.} If all intervals have the same radius, it is not difficult to show that any solution can be converted into one where $x_i < x_j$ if and only if $y_i < y_j$ without incurring any extra cost. Czyzowicz \etal~\cite{conf/adhoc/CzyzowiczKKLNOSUY10} showed an $O(n^2)$ time algorithm for this case which was later improved to $O(n \log n)$ by Andrews and Wang~\cite{AndrewsW17}. Andrews and Wang also showed a matching $\Omega(n \log n)$ lower bound.  When the radii are non-uniform, this is not the case anymore. In fact, Czyzowicz \etal~\cite{conf/adhoc/CzyzowiczKKLNOSUY10} showed that this variant of the problem is \NP-hard, and remarked that not even a $2$-approximation is possible in polynomial time. In fact their hardness proof can be modified to show (Theorem~\ref{thm:inapproximabilty}) that no approximation factor is possible. The catch is that the instance used in the reduction needs to have some intervals that are very small and some intervals that are very large. This is a scenario that is not likely to happen in practice, so the question is whether there is an approximation algorithm whose factor depends on the ratio of the largest radius to the smallest radius.

Let $\rho$ be the ratio between the largest radius $r_{\max} = \max_i r_i$ and the smallest radius $r_{\min} = \min_i r_i$. Theorem~\ref{thm:inapproximabilty} states that no $\rho^{1-\varepsilon}$ approximation algorithm exists for any $\varepsilon>0$ unless $\Ptime=\NP$. On the positive side we show an $O(n^3/\varepsilon^2)$ time $((2+\varepsilon)\rho+2/\varepsilon)$-approximation algorithm for any given $\varepsilon>0$. The general idea is to look at ``order-preserving'' solutions, that is, solutions where the set of sensors covering the barrier maintains their individual order from left to right. This will be described in more detail in Section~\ref{sec:order-preserving}.

We also study the problem from the perspective of parameterized complexity and show that the problem is hard even if the number of intervals required to move is small, that is \W{1}-hardness with respect to parameter number of moved intervals.
Complementary, we provide a fixed-parameter tractable algorithm when the problem is parameterized by the budget, i.e., the target sum of movements.

\paragraph*{The Min-Max and Min-Num models.}
Czyzowicz \etal~\cite{conf/adhoc/CzyzowiczKKLNOSUY10} also considered min-max version of the problem, where the aim is to minimize the maximum movement. If the sensors have the same radius they gave an $O(n^2)$ time algorithm. Chen \etal~\cite{cglw-ammsm-13} improved the bound to $O(n \log n)$. In the same paper Chen \etal{} presented an $O(n^2 \log n)$ time algorithm for the case when the sensors have different radius. For the min-num version Mehrandish \etal~\cite{Mehrandish-11} showed that the problem can be solved in polynomial time using dynamic programming if the sensor radii are uniform, otherwise the problem is \NP-hard.

\section{Order-Preserving Approximations} \label{sec:order-preserving}

Let $y$ be a solution to the barrier problem. We say a subset of intervals $S \subseteq \{1, \ldots, n\}$ is \emph{active} for a solution $y$ if the intervals in $S$ alone are enough to cover the barrier. Additionally, we say that $S$ is a \emph{minimal active set} if no proper subset of $S$ is active. Notice that in an optimal solution $y$ if $y_i \neq x_i$ then $i$ must belong to a minimal active set.
Without loss of generality we assume that $x_1 \leq x_2 \leq \cdots \leq x_n$. We say a solution $y$ is \emph{order-preserving} if it has an active set $S$ such that for any $i, j \in S$ with $i < j$, we have $y_i < y_j$.

Our algorithm is based on finding a nearly optimal order-preserving solution. First we show, in Section~\ref{ssec:order-preserving}, that there always exists an order-preserving solution that is a good approximation of the optimal unrestricted solution, and prove that our analysis is almost tight. Then, in Section~\ref{ssec:algorithm}, we show how to compute a nearly optimal order-preserving solution in polynomial time.

\subsection{Quality of Order-Preserving Solutions} \label{ssec:order-preserving}

The high level idea to prove that there exists an order-preserving solution that approximates the optimal solution is to start from an arbitrary optimal solution $y$ and progressively modify the positions of two overlapping active intervals so that they are in the right order and together cover the exact same portion of the barrier, as shown in Fig.~\ref{fig:order-preserving-example}. We refer to this process as the \emph{untangling} process.

\begin{figure}[t]
      \centering
	  \begin{tikzpicture}
		\tikzstyle{interval}=[|-|,thick]
		\tikzstyle{center}=[pos=0.5,circle,fill,inner sep=1pt]
		\draw[dotted,very thick] (0,-0.5) -- (0,1.5);
	  \draw[interval] (0,0) -- node[center,label=$y_j$] {} (2,0);
	  \draw[interval] (1,1) -- node[center, label=$y_i$] {} (4,1);
	  \draw[dotted,very thick] (4,-0.5) -- (4,1.5);
	  \node[single arrow, draw, inner sep=4pt] at (6.5,0.5) {\sc \ swap};
	  \draw[dotted,very thick] (9,-0.5) -- (9,1.5);
	  \draw[interval] (11,0) -- node[center,label=$y'_j$] {} (13,0);
	  \draw[interval] (9,1) -- node[center,label=$y'_i$] {} (12,1);
	  \draw[dotted,very thick] (13,-0.5) -- (13,1.5);
	\end{tikzpicture}
	\caption{Two overlapping intervals $i$ and $j$ being swapped. After the swap the union of the intervals cover the same section of the barrier but their centers swap order.}
    \label{fig:order-preserving-example}
\end{figure}
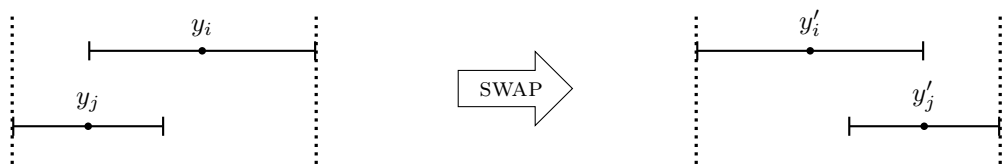

This untangling process continues until all overlapping active intervals are in order. Let us denote the resulting solution with $\hat{y}$. Our goal is to charge the cost of $\hat{y}$ to the intervals in such a way that the total charge an interval can receive is comparable to its contribution to the cost of $y$. More formally, we define an \emph{$\beta$-balanced cost sharing scheme} to be a function $\xi: S \rightarrow \mathbb{R}^+$, where
\begin{enumerate}[i)]
	\item $\mathrm{cost}(\hat{y}) \leq \sum_{i \in S} \xi(i)$, and
	\item $\xi(i) \leq \beta\, |x_i - y_i|$ for all $i \in S$.
\end{enumerate}

It is easy to see that the existence of a well balanced cost sharing scheme implies a good approximation guarantee.

\begin{lemma}\label{lem:untangling}
	Let $\hat{y}$ be the result of untangling an optimal solution $y$. If $\hat{y}$ admits an $\beta$-balanced cost sharing scheme then $\hat{y}$ is $\beta$-approximate.
\end{lemma}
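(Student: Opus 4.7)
The plan is to observe that the lemma is essentially a direct chain of inequalities once one parses the definitions carefully, so the proof should be short and the main task is verifying that the two properties of a $\beta$-balanced cost sharing scheme fit together correctly with the cost of the original optimum.

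First, I would unfold property (i) to get $\mathrm{cost}(\hat{y}) \le \sum_{i \in S} \xi(i)$. Then I would apply property (ii) termwise inside the sum to obtain $\sum_{i \in S} \xi(i) \le \beta \sum_{i \in S} |x_i - y_i|$. Finally, since $S \subseteq \{1,\dots,n\}$ and every term $|x_i - y_i|$ is nonnegative, extending the sum to all indices only increases it, yielding $\sum_{i \in S} |x_i - y_i| \le \sum_{i=1}^{n} |x_i - y_i| = \mathrm{cost}(y)$.

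Chaining these three steps gives $\mathrm{cost}(\hat{y}) \le \beta\,\mathrm{cost}(y)$. Since $y$ was chosen to be an optimal solution, $\mathrm{cost}(y) = \mathrm{OPT}$, so $\hat{y}$ is a $\beta$-approximation, which is exactly the conclusion.

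There is no real obstacle here: the only subtle point worth flagging is that in property (ii) the quantity $|x_i - y_i|$ refers to the movement of sensor $i$ in the \emph{original} optimum $y$, not in the untangled solution $\hat{y}$. This is precisely what makes the inequality bound $\mathrm{cost}(\hat{y})$ against $\mathrm{cost}(y)$ rather than against itself, and it is what justifies calling $\xi$ a cost sharing scheme for $\hat{y}$ that is charged to the cost of $y$. Beyond this bookkeeping remark, the argument is purely a composition of the two defining inequalities.
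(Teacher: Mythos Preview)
Your proof is correct and follows essentially the same approach as the paper: both chain property~(i), then property~(ii), then use that $y$ is optimal to conclude $\mathrm{cost}(\hat{y}) \le \beta\cdot\opt$. Your version is in fact slightly more careful than the paper's, since you explicitly note the passage from the sum over $S$ to the sum over all indices via nonnegativity, whereas the paper leaves this implicit.
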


\begin{proof}
	We bound the cost of $\hat{y}$ as follows:
	$\mathrm{cost}(\hat{y}) \leq \sum_{i} \xi(i) \leq \sum_i \beta |x_i - y_i| = \beta \cdot \mathrm{cost}(y) = \beta \cdot \opt$,
	where the first two inequalities follow from the definition of $\beta$-balancedness and the last equality follows from the fact that $y$ is optimal.
\end{proof}

To show the existence of a good cost sharing scheme, we will study the structure of an optimal solution $y$ and its untangling process leading to the order-preserving solution $\hat{y}$.

Let $\gamma(i) \subseteq S$ be the set of indices that \emph{cross} $i$, that is,
$ i < j \text{ and } y_i > y_j, \text{ or } i > j \text{ and } y_i < y_j$.
Let $\widetilde{\gamma}(i)=\{j \in \gamma(i) : |x_i - y_i| \geq |x_j - y_j|\}$, that is, the set of sensors in $\gamma(i)$ that move at most as far as $i$.
If $y_i<x_i$ we define $h(i)$ to be the $y$-rightmost
sensor in $\widetilde{\gamma}(i)$, and we let $\ell(i)$ be the $y$-rightmost sensor in $\widetilde{\gamma}(i)$ to the left or equal of $x_i$. 
See Figure~\ref{fig:structure}.
Symmetrically, if $y_i\geq x_i$ we define $h(i)$ to be the $y$-leftmost sensor in $\widetilde{\gamma}(i)$, and $\ell(i)$ to be the $y$-leftmost sensor in $\widetilde{\gamma}(i)$ to the right or equal of $x_i$. For sake of brevity, when the interval $i$ is clear from context, we refer to $h(i)$ as $h$ and to $\ell(i)$ as $\ell$. Note that $\ell(i)$ is not well-defined in the case when there are no intervals between $x_i$ and $y_i$.

Let us make some observations about the intervals. Figure~\ref{fig:structure} sums up these observations by depicting $i$ together with $\widetilde{\gamma}(i)$ with $\ell$ and $h$ highlighted.

\begin{figure}[t]
	\centering
		\includegraphics[width=0.7\textwidth]{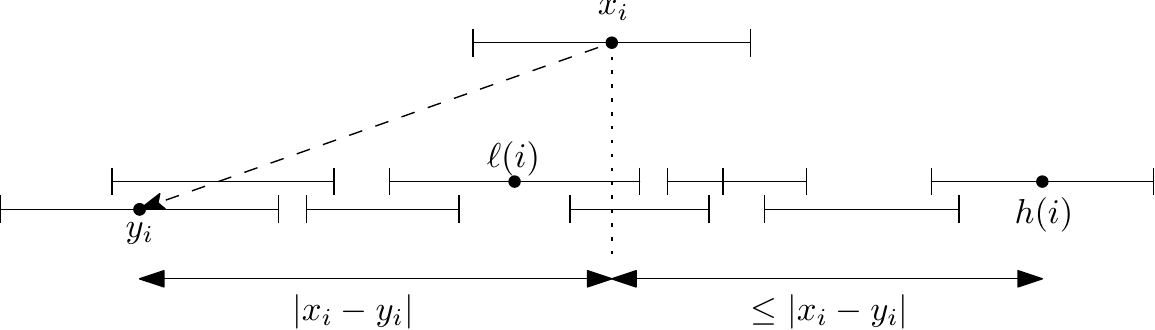}
	\caption{
		\label{fig:structure}
		An interval $i$ and its relation to $\tilde{\gamma}(i)$. In this case $y_i < x_i$, but a symmetric picture holds when $y_i > x_i$.
	}
\end{figure}

\begin{obs}
	\label{obs:close-gamma}
	Every $j \in \widetilde{\gamma}(i)$ must have $y_j \in [x_i - |x_i - y_i|, x_i + |x_i - y_i|]$.
\end{obs}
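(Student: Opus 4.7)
The plan is to exploit both parts of the definition of $\widetilde{\gamma}(i)$ simultaneously: the crossing condition and the move-distance condition $|x_j - y_j| \leq |x_i - y_i|$. The claim is invariant under the reflection swapping left and right of $x_i$, so I would begin by assuming without loss of generality that $y_i \leq x_i$, which turns the target window into $y_j \in [y_i, 2x_i - y_i]$.

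Next I would split on the two possible ways $j$ can cross $i$. In the first case, $i < j$ and $y_j < y_i$, I would rule this out entirely: since indices are sorted by initial position we have $x_i \leq x_j$, and combined with $y_j < y_i \leq x_i \leq x_j$ this gives $|x_j - y_j| = x_j - y_j \geq x_i - y_j > x_i - y_i = |x_i - y_i|$, contradicting $j \in \widetilde{\gamma}(i)$. So only the second crossing case can occur: $i > j$ and $y_j > y_i$, which already gives the lower bound $y_j \geq y_i$ for free.

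For the upper bound $y_j \leq 2x_i - y_i$ in this remaining case, I would distinguish whether $y_j$ lies at or below $x_i$, in which case $y_j \leq x_i \leq 2x_i - y_i$ trivially, or whether $y_j > x_i$. In the latter subcase $x_j \leq x_i < y_j$, so $|x_j - y_j| = y_j - x_j$, and the move-distance bound from $\widetilde{\gamma}(i)$ gives $y_j - x_j \leq x_i - y_i$, hence $y_j \leq x_j + (x_i - y_i) \leq 2x_i - y_i$ as required.

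I don't expect a real obstacle here, since the observation is essentially an immediate consequence of the two defining conditions of $\widetilde{\gamma}(i)$ together with $x_1 \leq \cdots \leq x_n$; the only mild subtlety is remembering to use the ordering $x_i \leq x_j \iff i \leq j$ in the correct direction to turn the crossing condition into a useful inequality, and then invoking $|x_j - y_j| \leq |x_i - y_i|$ in the direction that matches the sign of $y_j - x_j$.
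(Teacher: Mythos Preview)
Your argument is correct and follows essentially the same route as the paper: reduce by symmetry to $y_i \leq x_i$, use the crossing condition together with $|x_j-y_j|\le|x_i-y_i|$ to force $j<i$ (hence $x_j\le x_i$) and $y_j>y_i$, and then bound $y_j$ from above via $y_j \le x_j + |x_i-y_i| \le x_i + |x_i-y_i|$. Your version is in fact more explicit than the paper's, which simply asserts ``it follows that $x_j<x_i$ and $y_j>y_i$'' without spelling out why the other crossing direction is impossible; your subcase split on $y_j\lessgtr x_i$ for the upper bound is unnecessary (the chain $y_j - x_j \le |y_j-x_j| \le |x_i-y_i|$ works uniformly) but harmless.
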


\begin{proof}
  Note that if $x_i = y_i$ then the claim is trivially true since $\widetilde{\gamma}(i) = \emptyset$.

	Without loss of generality assume $x_i > y_i$, since the case $x_i < y_i$ is symmetric. Since $j \in \widetilde\gamma(i)$ we have $|x_j - y_j| \leq |x_i - y_i|$, and it follows that $x_j < x_i$ and $y_j > y_i$. Therefore, $y_j > y_i = x_i - |x_i - y_i|$ and $y_j < x_j + | x_i - y_i| < x_i + |x_i - y_i|$.
\end{proof}

\begin{obs}
	\label{obs:at-most-two}
	Let $y$ be an optimal solution and let $S$ be a minimal active set of~$y$. Every point stabs (intersects) at most two intervals in $S$.
\end{obs}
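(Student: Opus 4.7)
The plan is to argue by contradiction using only the minimality of $S$; the optimality of $y$ will play no role in this observation. Suppose, toward a contradiction, that some point $p$ lies in three distinct intervals $I_a = B(y_a, r_a)$, $I_b = B(y_b, r_b)$, $I_c = B(y_c, r_c)$ of $S$. If one of these three intervals is contained in the union of the other two, then removing it from $S$ preserves the union $\bigcup_{i \in S} B(y_i, r_i)$; hence the barrier $[0,L]$ remains covered, contradicting the assumption that $S$ is minimal.

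So the crux of the proof is the following purely geometric lemma, which I would prove on the side: among any three intervals on the real line that share a common point, at least one is contained in the union of the other two. I would establish this by relabeling the intervals so that the left endpoints satisfy $a_1 \leq a_2 \leq a_3$, and then splitting on the position of $b_2$, the right endpoint of the middle interval. If $b_2 \leq b_1$, then $I_2 \subseteq I_1$ directly. If $b_2 \leq b_3$, then $I_2 \subseteq [a_1, b_3] = I_1 \cup I_3$, where the union collapses into a single interval because both $I_1$ and $I_3$ contain $p$. Finally, if $b_2$ exceeds both $b_1$ and $b_3$, then $a_3 \geq a_2$ and $b_3 < b_2$ force $I_3 \subseteq I_2$. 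In every case one of the three intervals is redundant.

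Applying this lemma to $\{I_a, I_b, I_c\}$ yields the promised contradiction and completes the proof. I do not anticipate any genuine obstacle here: the only point requiring care is ensuring that the case split in the geometric lemma is exhaustive and that the ``collapse'' of $I_1 \cup I_3$ into one interval is justified by the common point $p$. Once that is in place, the conclusion follows by picking any triple in $S$ that witnesses a triply-stabbed point.
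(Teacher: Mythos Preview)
Your proof is correct and follows essentially the same approach as the paper: both argue that a point stabbed by three intervals of $S$ would make one of them redundant, contradicting minimality. The paper states this in a single sentence without justification, whereas you spell out the underlying geometric lemma (that among three pairwise-overlapping intervals one is contained in the union of the other two); your observation that optimality of $y$ is irrelevant here is also accurate.
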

\begin{proof}
	If three active intervals in $S$ are stabbed by one point, then one of those intervals can be removed without making the solution infeasible, thus contradicting minimality of $S$.
\end{proof}

\begin{obs}
	\label{obs:overlaps-in-gamma}
	In an optimal solution $y$, if $y_i < x_i$ then the intervals $j \in \widetilde{\gamma}(i)$ such that $y_j > x_i$ do not overlap; similarly, if $y_i > x_i$ then the intervals $j \in \widetilde{\gamma}(i)$ such that $y_j < x_i$ do not overlap.
\end{obs}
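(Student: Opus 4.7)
The plan is to derive a contradiction from the optimality of $y$: if such an overlap exists, we construct a strictly cheaper feasible solution by nudging one of the overlapping intervals back toward its original position. I focus on the case $y_i < x_i$; the other case is symmetric. Suppose for contradiction there are distinct $j, j' \in \widetilde{\gamma}(i)$ with $y_j, y_{j'} > x_i$ whose final intervals $B(y_j, r_j)$ and $B(y_{j'}, r_{j'})$ share a positive-length subinterval, and assume without loss of generality that $y_j \leq y_{j'}$. The equality $y_j = y_{j'}$ is ruled out, for then one interval would be contained in the other, contradicting minimality of $S$; hence $y_j < y_{j'}$. The argument in the proof of Observation~\ref{obs:close-gamma} gives $x_j < x_i < y_j$, so $j$ has moved strictly to the right, and likewise for $j'$.

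The key step is to redefine $y_j$ as $y_j^\star = \max(x_j,\, y_{j'} - r_{j'} - r_j)$ while leaving every other position fixed. Because $j$ and $j'$ overlap in an open interval we have $y_{j'} - r_{j'} - r_j < y_j$, and also $x_j < y_j$, so $y_j^\star < y_j$. In the branch $y_j^\star = y_{j'} - r_{j'} - r_j$ the new interval $B(y_j^\star, r_j)$ is tangent to $B(y_{j'}, r_{j'})$ at $y_{j'} - r_{j'}$; in the branch $y_j^\star = x_j$ we necessarily have $x_j + r_j \geq y_{j'} - r_{j'}$, so $B(y_j^\star, r_j)$ still meets $B(y_{j'}, r_{j'})$. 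In either branch $B(y_j^\star, r_j) \cup B(y_{j'}, r_{j'}) \supseteq B(y_j, r_j) \cup B(y_{j'}, r_{j'})$, so every barrier point previously covered remains covered, and the modified placement is feasible.

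Since $y_j^\star \in [x_j, y_j)$, the movement cost of $j$ decreases by $y_j - y_j^\star > 0$ while no other sensor's cost changes, contradicting the optimality of $y$. The symmetric case $y_i > x_i$ is handled by nudging $j$ instead to $\min(x_j,\, y_{j'} + r_{j'} + r_j)$, where $j'$ is now the overlapping partner with smaller $y$. The step I expect to require the most care in the formal write-up is the feasibility check in the $y_j^\star = x_j$ branch, since this is where the truncation interacts with the overlap condition; but as noted, that branch is triggered precisely when $x_j + r_j \geq y_{j'} - r_{j'}$, so the overlap with $j'$ is maintained automatically.
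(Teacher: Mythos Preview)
Your proof is correct and takes essentially the same approach as the paper: derive a contradiction to optimality by sliding the leftmost of the two overlapping sensors back toward its starting position, yielding a strictly cheaper feasible solution. Your treatment of the overshoot case via the $\max$ is in fact more careful than the paper's terse argument; the only unstated detail is that your feasibility claim $B(y_j^\star,r_j)\cup B(y_{j'},r_{j'})\supseteq B(y_j,r_j)\cup B(y_{j'},r_{j'})$ also needs $y_j+r_j\le y_{j'}+r_{j'}$ (i.e., $j$ does not contain $j'$), which follows from the same minimality of $S$ you already invoke for the $y_j=y_{j'}$ case.
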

\begin{proof}
	Without loss of generality assume $x_i > y_i$, since the case $x_i < y_i$ is symmetric. If there were two indices $j, j' \in \widetilde{\gamma}(i)$ that overlap in $y$ and $y_j > y_{j'} > x_i$, then we could reduce $y_{j'}$ by $r_j + r_{j'} - (y_j - y_{j'})$ to get another feasible solution with lower cost, since $x_j,x_{j'}<x_i$.
	See Figure~\ref{fig:observation} for an illustration.
\end{proof}

\begin{figure}[t]
	\centering
		\includegraphics[width=6cm]{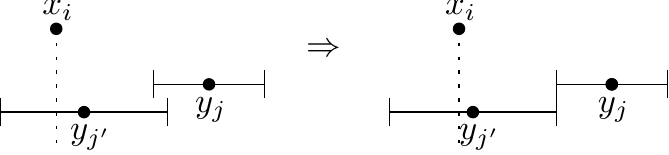}
	\caption{
		\label{fig:observation}
		Illustrating the proof of Observation~\ref{obs:overlaps-in-gamma}, showing that the intervals of $j$ and $j'$ cannot overlap in $y$.
	}
\end{figure}

\begin{obs}
	\label{obs:length-gamma}
	If $\ell$ is well defined for $i$ in an optimal solution $y$ then
	 \[\sum_{j \in \tilde{\gamma}(i)} 2 r_j \leq 3 |x_i - y_i| + r_\ell + r_h.\]
\end{obs}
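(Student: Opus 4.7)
Without loss of generality assume $y_i < x_i$ and write $d = x_i - y_i$; the case $y_i > x_i$ is symmetric. By Observation~\ref{obs:close-gamma} every $j \in \widetilde\gamma(i)$ has $y_j \in (y_i, 2x_i - y_i]$. The plan is to split $\widetilde\gamma(i) = G_1 \sqcup G_2$ around $x_i$, with $G_1 = \{j : y_j \leq x_i\}$ and $G_2 = \{j : y_j > x_i\}$, so that $\ell$ is the $y$-rightmost element of $G_1$ and $h$ is the $y$-rightmost element of $G_2$ (or of $G_1$ when $G_2 = \emptyset$); then I would bound each group using the structural information already proved.

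For $G_2$, Observation~\ref{obs:overlaps-in-gamma} says that its intervals are pairwise disjoint. Listing $G_2$ in $y$-order as $a_1 < \cdots < a_m = h$, disjointness gives $y_{a_{k+1}} - y_{a_k} \geq r_{a_k} + r_{a_{k+1}}$, and a telescoping calculation then yields $\sum_{j \in G_2} 2 r_j \leq (y_h - y_{a_1}) + r_{a_1} + r_h < d + r_{a_1} + r_h$, using $y_{a_1} > x_i$ and $y_h \leq 2x_i - y_i$. For $G_1$, listed in $y$-order as $b_1 < \cdots < b_q = \ell$, Observation~\ref{obs:at-most-two} applied to each triple $b_k, b_{k+1}, b_{k+2}$ forces $b_k$ and $b_{k+2}$ to be disjoint, so both the odd- and the even-indexed sub-sequences are pairwise disjoint. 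Telescoping each sub-sequence separately---each has centers in $(y_i, x_i]$, a window of length $d$---produces a bound of the form $\sum_{j \in G_1} 2 r_j \leq 2d + (\text{four boundary radii})$.

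The main obstacle is the combining step: naively adding the two bounds leaves residual boundary-radius terms (notably $r_{a_1}$ from $G_2$ together with the leftmost $G_1$ radii) that must be absorbed into $r_\ell + r_h$ and the slack inside $3d$. To close this gap I would apply Observation~\ref{obs:at-most-two} once more, now to triples that involve the interval $i$ itself, which belongs to the minimal active set $S$ because $y_i \neq x_i$. Since $i$ covers $[y_i - r_i, y_i + r_i]$, a leftmost $G_1$ interval extending too far past $y_i$ would combine with $i$ and a further $G_1$ interval to stab a common point, violating the at-most-two property; this forces the leftmost $G_1$ radii to be controlled by $d$. A symmetric adjacency argument across $x_i$, combining Observations~\ref{obs:at-most-two} and~\ref{obs:overlaps-in-gamma}, dominates $r_{a_1}$ by a combination of $d$ and $r_\ell$. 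Carrying out this boundary bookkeeping precisely enough to collapse the intermediate expression to exactly $3 |x_i - y_i| + r_\ell + r_h$ is the most delicate step of the proof.
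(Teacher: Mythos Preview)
Your decomposition into $G_1$ and $G_2$ and your use of Observations~\ref{obs:close-gamma}--\ref{obs:overlaps-in-gamma} are exactly the ingredients the paper uses, but the paper packages them more economically, and this is worth seeing because it sidesteps precisely the ``delicate boundary bookkeeping'' you flag as unfinished.

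Rather than telescoping, the paper reads $\sum_{j\in\widetilde\gamma(i)} 2r_j$ as the integral over the line of the number of $\widetilde\gamma(i)$-intervals covering each point, and bounds that covering multiplicity region by region. On $[y_i,x_i]$ the multiplicity is at most~$2$ (Observation~\ref{obs:at-most-two}), contributing $2|x_i-y_i|$. On $[x_i,y_h]$ the $G_2$-intervals are pairwise disjoint (Observation~\ref{obs:overlaps-in-gamma}), so they contribute multiplicity at most~$1$ there, giving another $|x_i-y_i|$ since $y_h\le 2x_i-y_i$. Only three boundary corrections then arise: the part of $h$ to the right of $y_h$ ($r_h$), the part of $\ell$ that may protrude past $x_i$ ($r_\ell$), and the part of the $y$-leftmost $j\in\widetilde\gamma(i)$ that may protrude to the left of $y_i$. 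The last one is absorbed for free: that overhang is strictly less than $r_i$ (otherwise $j$ would contain $i$'s interval, contradicting minimality of $S$), while on $[y_i,y_i+r_i]$ the sensor $i$ itself occupies one of the two slots allowed by Observation~\ref{obs:at-most-two}, so the bound of $2$ there was already an overcount by $r_i$.

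Your odd/even split of $G_1$ is a valid way to encode ``multiplicity $\le 2$'', but by turning it into two separate telescoping sums you manufacture four $G_1$ boundary radii plus $r_{a_1}$, and then have to argue each of them away. The arguments you sketch for this absorption (using $i$ in Observation~\ref{obs:at-most-two}; relating $a_1$ to $\ell$) are exactly the observations the paper uses \emph{once}, upstream, to avoid creating those terms in the first place. So your route can be completed, but it recreates work; the integral-of-multiplicity phrasing gives the constant $3|x_i-y_i|+r_\ell+r_h$ directly with a single left-boundary cancellation.
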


\begin{proof}
  Note that if $x_i = y_i$ then the claim is trivially true since $\widetilde{\gamma}(i) = \emptyset$.

  Without loss of generality assume $x_i > y_i$, since the case $x_i < y_i$ is symmetric.
	By Observation~\ref{obs:at-most-two} every point in the interval $[y_i, x_i]$ stabs at most two intervals from $\widetilde{\gamma}(i)$.
	By Observation~\ref{obs:overlaps-in-gamma} every point in the interval $[x_i, y_h]$ stabs at most one interval $j \in \widetilde{\gamma}(i)$ such that $y_j>x_i$.
	This accounts for the term $3 |x_i - y_i|$.
	Additionally, we have to add $r_h$ to account for the interval $[y_h,y_h+r_h]$ and $r_\ell$, since $\ell(i)$ might overlap interval $[x_i,x_i+r_\ell]$.
	Let $j$ be the $y$-leftmost sensor in $\widetilde{\gamma}(i)$.
	We do not have to account for the fact that $x_j$ might end left of $y_i$, that is the interval $[y_j-r_j,y_i]$.
	The reason is that $| y_i - y_j + r_j | < r_i$ and counted the interval $[y_i, y_i + r_i]$ already needlessly when considering that $[y_i, x_i]$ stabs at most two intervals from $\widetilde{\gamma}(i)$.
  It follows that
	$\sum_{j \in \tilde{\gamma}(i)} 2 r_j \leq 3 |x_i - y_i| + r_\ell + r_h$.
\end{proof}

\begin{obs}
	\label{obs:card-gamma}
	If $\ell$ is well defined for $i$ in an optimal solution $y$ then
	\[|\widetilde{\gamma}(i)| \leq 3 + \frac{ 3\, |x_i - y_i| - 2\, r_i - r_\ell  - r_h}{2\, r_{\min}}.\]
\end{obs}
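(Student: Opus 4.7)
My plan is to convert the aggregate-length bound in Observation~\ref{obs:length-gamma} into a cardinality bound by separating out the contributions of the distinguished intervals $\ell$ and $h$ and then applying the uniform lower bound $r_j \geq r_{\min}$ to every radius.

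Concretely, I would split the left-hand side of Observation~\ref{obs:length-gamma} as $\sum_{j \in \widetilde{\gamma}(i)} 2r_j = 2r_\ell + 2r_h + \sum_{j \in \widetilde{\gamma}(i) \setminus \{\ell, h\}} 2r_j$ and subtract $r_\ell + r_h$ from both sides of the resulting inequality to obtain $\sum_{j \in \widetilde{\gamma}(i) \setminus \{\ell, h\}} 2r_j \leq 3|x_i - y_i| - r_\ell - r_h$. Bounding each of the $|\widetilde{\gamma}(i)| - 2$ remaining summands below by $2r_{\min}$ then yields the preliminary cardinality bound
\[
|\widetilde{\gamma}(i)| \leq 2 + \frac{3|x_i - y_i| - r_\ell - r_h}{2r_{\min}}.
\]

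To match the form stated, I would rewrite the right-hand side using the identity $2 + \frac{A}{2r_{\min}} = 3 + \frac{A - 2r_{\min}}{2r_{\min}}$, producing $3 + \frac{3|x_i - y_i| - 2r_{\min} - r_\ell - r_h}{2r_{\min}}$. Sharpening $-2r_{\min}$ to $-2r_i$ in the numerator is the only step I expect to require work beyond the direct application of Observation~\ref{obs:length-gamma}: since $r_i \geq r_{\min}$, the claimed form is strictly tighter than the preliminary bound, and its proof must use the minimality of the active set $S$ to argue that some interval in $\widetilde{\gamma}(i) \setminus \{\ell, h\}$ is forced (in order to be necessary) to extend beyond $i$'s own coverage interval $[y_i - r_i, y_i + r_i]$, and hence to contribute an additional $2(r_i - r_{\min})$ to the sum. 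This structural refinement, grounded in Observation~\ref{obs:at-most-two} and the minimality of $S$, is the main obstacle I anticipate; once in hand, the claimed bound follows by the same arithmetic rearrangement as above, with Observation~\ref{obs:length-gamma} remaining the principal ingredient of the proof.
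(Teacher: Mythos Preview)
Your overall plan matches the paper's: start from Observation~\ref{obs:length-gamma}, pull out distinguished intervals, and bound the remaining radii below by $r_{\min}$. You also correctly isolate where the real work lies, namely in upgrading $-2r_{\min}$ to $-2r_i$.

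However, your sketch for closing that gap is not the paper's argument and, as stated, is too vague to go through. The paper does not argue that some single interval in $\widetilde{\gamma}(i)\setminus\{\ell,h\}$ is forced to contribute an extra $2(r_i - r_{\min})$ of length. Instead, it separates out a \emph{third} distinguished interval $j$, namely the $y$-leftmost sensor in $\widetilde{\gamma}(i)$ (when $x_i>y_i$; the rightmost otherwise), and then \emph{re-derives} a sharper length bound for the remaining set $\widetilde{\gamma}(i)\setminus\{\ell,h,j\}$ rather than quoting Observation~\ref{obs:length-gamma} as a black box. The key structural point, via Observation~\ref{obs:at-most-two}, is that among the intervals of $\widetilde{\gamma}(i)$ only $j$ can overlap with $i$ in $y$; once $j$ is removed, the double-counting region in the proof of Observation~\ref{obs:length-gamma} shrinks from $[y_i,x_i]$ to $[y_i+r_i,x_i]$. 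This saves exactly $2r_i$ in the length estimate, giving
\[
\sum_{j'\in\widetilde{\gamma}(i)\setminus\{\ell,h,j\}} 2r_{j'}
\;\le\; 2\bigl(|x_i-y_i|-r_i\bigr)+|x_i-y_i|-r_\ell-r_h
\;=\; 3|x_i-y_i|-2r_i-r_\ell-r_h,
\]
after which dividing by $2r_{\min}$ and adding back the three explicitly counted intervals yields the stated bound. So the missing idea is to count one more interval explicitly and revisit the packing argument behind Observation~\ref{obs:length-gamma}, not to post-process its conclusion.
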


\begin{proof}
   From Observation~\ref{obs:length-gamma} we have
   	$\sum_{j \in \tilde{\gamma}(i)} 2 r_j \leq 3 \, |x_i - y_i| + r_\ell + r_h$.
	Notice that each interval in $\tilde{\gamma}(i)$ has length at least $2 r_{\min}$, therefore the number of intervals in $\widetilde{\gamma}(i)$ is no more than $\sum_{j \in \tilde{\gamma}(i)} 2 r_j$ divided by $2 r_{\min}$.
	To get a better bound we count three intervals explicitly: $\ell(i)$, $h(i)$, and $j$, where $j$ is the $y$-leftmost
  sensor if $x_i> y_i$ or the rightmost otherwise.

  Note that if $x_i = y_i$ then the claim is trivially true since $\widetilde{\gamma}(i) = \emptyset$. Without loss of generality assume $x_i > y_i$, since the case $x_i < y_i$ is symmetric.
  Ignoring $j$, we can adjust the bound from Observation~\ref{obs:length-gamma} as follows.
  Since by Observation~\ref{obs:at-most-two} every point stabs at most two intervals, only $j$ might overlap with $i$ in $y$.
  Hence, we only need to consider the interval $[y_i+r_i,x_i]$ where every point stabs at most two intervals from $\widetilde{\gamma}(i)$.
  Hence, ignoring the three explicitly counted intervals, the sum of the lengths of the remaining intervals of $\widetilde{\gamma}(i)$ can be bounded by $2 (|x_i - y_i| - r_i) + |x_i - y_i| + r_\ell + r_h - 2 r_\ell - 2 r_h$. Therefore, we have
	$|\widetilde{\gamma}(i)| \leq 3 + \frac{ 3 |x_i - y_i| - 2 r_i - r_\ell  - r_h}{2 r_{\min}}$.
\end{proof}

Now everything is in place to describe our cost sharing schemes. Our first scheme is simpler to describe and is $(3 \rho + 4)$-balanced. Our second scheme is a refinement and is $\big((2 + \epsilon) \rho + 2/\epsilon \big)$-balanced for any $\epsilon > 0$.

\begin{lemma}\label{lem:scheme-first-bound}
	For an optimal solution $y$ to the barrier problem there is an untangling $\hat{y}$ of $y$ such that there is a $(3 \rho + 4)$-balanced cost sharing scheme.
\end{lemma}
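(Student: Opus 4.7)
The plan is to construct a cost sharing $\xi \colon S \to \mathbb{R}^+$ satisfying (i) $\mathrm{cost}(\hat y) \le \sum_{i\in S} \xi(i)$ and (ii) $\xi(i) \le (3\rho+4)\,|x_i - y_i|$. From the triangle inequality, for every $i \in \hat S$ we have $|x_i - \hat y_i| \le |x_i - y_i| + |y_i - \hat y_i|$, so the total cost of $\hat y$ is bounded by the sum of original movements plus the total ``untangling displacement''. Writing $\sigma_i(j)$ for the magnitude of the shift of $i$ in its pairwise swap with $j$, we have $|y_i - \hat y_i| \le \sum_j \sigma_i(j)$.

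The scheme applies a \emph{bigger-mover pays} rule. Every swap between $i$ and $j$ has a unique bigger mover, characterised by exactly one of $j\in\widetilde\gamma(i)$ or $i\in\widetilde\gamma(j)$ holding, and I would charge \emph{both} participants' shifts to that bigger mover:
$$\xi(i) \;:=\; |x_i - y_i| \;+\; \sum_{j\in\widetilde\gamma(i)}\bigl(\sigma_i(j) + \sigma_j(i)\bigr).$$
Property~(i) follows: each swap is counted exactly once, so $\sum_i \xi(i) \ge \sum_i |x_i - y_i| + \sum_i |y_i - \hat y_i| \ge \mathrm{cost}(\hat y)$, provided the untangling is performed in an order (e.g.\ innermost crossings first) that ensures $\hat S \subseteq S$.

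The central calculation is property~(ii). Solving explicitly for the post-swap positions of two overlapping intervals yields the identity $\sigma_i(j) + \sigma_j(i) = 2\max(y_i - y_j,\,|r_i - r_j|)\le 2(r_i + r_j)$, where the inequality uses the overlap condition $y_i - y_j \le r_i + r_j$. Hence
$$\xi(i) - |x_i - y_i| \;\le\; 2\,r_i\,|\widetilde\gamma(i)| \;+\; \sum_{j\in\widetilde\gamma(i)} 2\,r_j.$$
Observation~\ref{obs:length-gamma} bounds the last sum by $3\,|x_i - y_i| + r_\ell + r_h$, while Observation~\ref{obs:card-gamma} bounds $2\,r_i\,|\widetilde\gamma(i)|$ by $6\,r_i + 3\rho\,|x_i - y_i|$. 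Combining these gives $\xi(i) \le (4 + 3\rho)\,|x_i - y_i| + O(r_{\max})$.

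The hardest step, and the main obstacle, is absorbing the residual additive $O(r_{\max})$ term into the bound $(3\rho + 4)\,|x_i - y_i|$. My intended approach is to exploit the three intervals singled out in Observation~\ref{obs:card-gamma} (namely $\ell(i)$, $h(i)$, and the $y$-extreme element of $\widetilde\gamma(i)$), handling their swaps by tighter per-swap shift estimates than the crude $2(r_i+r_j)$ via Observations~\ref{obs:close-gamma} and~\ref{obs:overlaps-in-gamma}; combined with a case split on the size of $|x_i - y_i|$ relative to $r_{\min}$, so that for $|x_i - y_i| \ge r_{\min}$ the slack $r_{\max}$ becomes a $\rho$-multiple of $|x_i - y_i|$ and collapses into the multiplicative factor, the claimed $(3\rho+4)$-balancedness follows.
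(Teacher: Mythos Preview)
Your charging rule (the bigger mover pays for both sides of each swap) and your intermediate bound $\xi(i)\le |x_i-y_i|+2r_i|\widetilde\gamma(i)|+\sum_{j\in\widetilde\gamma(i)}2r_j$ match the paper's setup. The genuine gap is in your last paragraph: a threshold at $r_{\min}$ cannot absorb the additive term. After applying Observations~\ref{obs:length-gamma} and~\ref{obs:card-gamma} the residual is of order $r_i$ (concretely $6r_i+r_\ell+r_h$ up to negative corrections). From $|x_i-y_i|\ge r_{\min}$ you only get $r_i\le\rho\,|x_i-y_i|$, so this residual becomes an \emph{additional} $O(\rho)\,|x_i-y_i|$ on top of the $3\rho$ already present; you end up with $C\rho+O(1)$ for some $C>3$, not $3\rho+4$. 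And you give no argument at all for the regime $|x_i-y_i|<r_{\min}$, where $\widetilde\gamma(i)$ may still be nonempty.

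The paper closes this by a structural case split on whether $i$ overlaps $h(i)$ or $\ell(i)$ in $y$. The decisive case is when $i$ and $\ell$ do \emph{not} overlap: then $|x_i-y_i|\ge r_i+r_\ell$, a lower bound that scales with $r_i$ itself, so dividing through turns the $6r_i$ into a bounded constant and the algebra lands exactly on $3\rho+4$. The remaining three cases ($i$ overlaps $h$; $\ell$ undefined; $i$ overlaps $\ell$) are handled separately and each give a ratio at most $2\rho+3$. The paper also uses a same-side/opposite-side dichotomy for $\hat y_i$ relative to $x_i$, which in the harder scenario replaces your $+|x_i-y_i|$ by $-|x_i-y_i|$ (their inequality~\eqref{eq:xi-right}); dropping this costs another additive $2$ in the factor. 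So the missing idea is the overlap-based case analysis yielding $|x_i-y_i|\ge r_i+r_\ell$, not a comparison with $r_{\min}$.
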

\begin{proof}
	The high level idea of our charging scheme is as follows: When $i$ swaps places with $j \in \tilde{\gamma}(i)$, we charge $i$ enough to pay for the movements of both $i$ and $j$. In particular if $\tilde{\gamma}(i) = \emptyset$ then we do not charge $i$ at all, that is, $\xi(i) = 0$.

	From now on we assume that $\tilde{\gamma}(i) \neq \emptyset$. For the analysis it will be useful to study how $i$ moves in the untangling process. If $y_i < x_i$ then swapping $i$ and $j \in \tilde{\gamma}(i)$ always moves $i$ to the right; similarly, if $y_i > x_i$ then swapping $i$ and $j \in \tilde{\gamma}(i)$ always moves $i$ to the left. On the other hand, when swapping $i$ and $j \in \gamma(i) \setminus \tilde{\gamma}(i)$, the interval $i$ can move either left or right.

		We consider two scenarios. If $\hat{y}_i$ ends up on the same side of $x_i$ as $y_i$ then
	$	  |x_i - \hat{y}_i| \leq \sum_{j\in \gamma(i)\setminus \tilde{\gamma}(i)} 2 r_j  + |x_i - y_i|$,
		so we charge $2r_j$ to each $j \in \gamma(i) \setminus \tilde{\gamma}(i)$ and $|x_i - y_i|$ to $i$. Thus, under this scenario, the total amount charged to $i$ is
		\begin{equation}
			\label{eq:xi-left}
			\xi(i) \leq 2r_i |\widetilde{\gamma}(i)| + |x_i - y_i|
		\end{equation}
		The second scenario is when $\hat{y}_i$ and $y_i$ end up on opposite sides of $x_i$ then
	$	  |x_i - \hat{y}_i| \leq \sum_{j \in \gamma(i)} 2 r_j - |x_i - y_i|$,
		so we charge $\sum_{j \in \tilde{\gamma}(i)} 2 r_j - |x_i - y_i|$ to $i$ and $2r_j$ to each $j \in \gamma(i) \setminus \tilde{\gamma}(i)$. Thus, under this scenario, the total amount charged to $i$ is
		\begin{equation}
			\label{eq:xi-right}
			\xi(i) \leq 2r_i |\widetilde{\gamma}(i)| + \sum_{j \in \widetilde{\gamma}(i)} 2 r_j - |x_i - y_i|.
		\end{equation}

	The rest of the proof is broken up into four cases.

\medskip
\noindent
	Case 1: Intervals $i$ and $h(i)$ overlap in $y$.
	
	In this case $\widetilde{\gamma}(i) = \set{h(i)}$ and $\widetilde{\gamma}(h(i)) = \emptyset$. Furthermore, if there is another interval $i'$ such that $h(i') = h(i)$ then $i'$ and $h(i')$ do not overlap. Indeed, if $y_i$ lies in between $y_{i'}$ and $y_{h(i)}$ then $i'$ and $h(i)$ cannot overlap otherwise there is a point covered by $i$, $i'$ and $h(i)$;
   if $y_{i'}$ lies in between $y_i$ and $y_{h(i)}$ we get a similar contradiction, so it must be that $y_{h(i)}$ lies in between $y_i$ and $y_{i'}$. See Fig.~\ref{fig:case1}. This means that $i$ and $i'$ cross, so either $i \in \widetilde{\gamma}(i')$ or $i' \in \widetilde{\gamma}(i)$, which, together with $h(i)= h(i')$, yields a contradiction.

	Therefore, we can run the untangling process so that all pairs $i$ and $h(i)$ that overlap in $y$ are swapped first. Let $y'$ be the solution after these initial swaps are carried out. Then,
	\begin{align*}
	  |x_i - {y'}_i| + |x_{h(i)} - {y'}_{h(i)}| & \leq |x_i - y_i| + |x_{h(i)} - y_{h(i)}| + 2 |r_i - r_{h(i)}|\\
	                                            & \leq 4 ( |x_i - y_i| + |x_{h(i)} - y_{h(i)}|)
	                                            & \leq 6 |x_i - y_i|.
	\end{align*}
	The first inequality is due to the fact that additional cost comes from swapping $i$ and $h$, where at most one them moves in a direction that increases the cost and they are overlapping.
	Hence the additional cost is bounded by $2 |r_i - r_{h(i)}|$.
	The second inequality  is due to the fact that the movement $|x_i - y_i| + |x_{h(i)} - y_{h(i)}|$ needs to be larger than $|r_i - r_{h(i)}|$ for $i$ and $h$ to swap positions and both be active.
	
	Later on in the untangling process, $i$ and $h$ may be swapped with another interval, call it $j$, causing them to move further and to increase their contribution towards $\mathrm{cost}(\hat{y})$. If this happens, we charge the movement of $i$, or $h$, to $j$. Therefore, setting $\xi(i) = 6 |x_i - y_i|$ is enough to cover the contribution of $i$ and $h$ to the cost of $y$ that is not covered by other intervals. Obviously, the scheme so far is $(3 \rho + 4)$-balanced.

\begin{figure}[t]
	\centering
		\includegraphics[width=6cm]{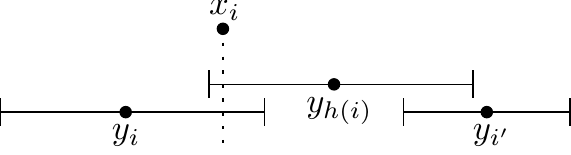}
	\caption{
		\label{fig:case1}
		If $h(i)=h(i')$ then $i$ and $i'$ must lie on opposite sides of $h(i)$ in $y$.
	}
\end{figure}

The proof of Cases 2 and 3 are deferred to the
\longversion{appendix}%
\shortversion{long version~\cite{arxiv}}
where it is shown that when $\ell$ is not well-defined (Case 2) or $\ell$ is well-defined and intervals $\ell$ and $i$ overlap in $y$ (Case 3), then $\frac{\xi(i)}{|x_i-y_i|} \leq 2 \rho + 1$.

\medskip
\noindent
Case 4: $\ell$ is well-defined and  intervals $i$ and $\ell$ do not overlap in $y$.

The assumption implies $|x_i - y_i| \geq r_i + r_\ell$. Since we will use Observation~\ref{obs:length-gamma} to bound $\sum_{j \in \widetilde{\gamma}(i)} 2r_j$, it follows that the sub-case when $i$ is charged the most is when $y_i$ and $\hat{y}_i$ are on opposite sides of $x_i$, so we start with the bound provided by~\eqref{eq:xi-right}:
	\begin{align*}
		\xi(i) & \leq 2r_i |\widetilde{\gamma}(i)| + \sum_{j \in \widetilde{\gamma}(i)} 2 r_j - |x_i - y_i| \\
		       & \leq r_i \left( 6 + \frac{3 |x_i - y_i| - 2r_i - r_\ell - r_h}{r_{\min}} \right) + 2|x_i - y_i| + r_\ell + r_h \\
		       &  = \left( 3 \frac{r_i}{r_{\min}} + 2 + r_i \frac{6 - 2r_i/r_{\min} - r_\ell/r_{\min} - r_h/ r_{\min} + r_h / r_i + r_\ell / r_i}{|x_i - y_i|} \right) |x_i - y_i|  \\
		       &  \leq \left( 3 \frac{r_i}{r_{\min}} + 2 + \frac{4 - 2r_i/r_{\min} + 2 r_{\min} / r_i}{1 + r_{\min}/r_i} \right) |x_i - y_i|
		       \qquad  \leq \left( 3 \frac{r_i}{r_{\min}} + 4\right) |x_i - y_i|\\
		       &  \leq \left( 3 \rho + 4\right) |x_i - y_i|
	\end{align*}
	where the second inequality follows from Observations~\ref{obs:length-gamma} and~\ref{obs:card-gamma}, the third inequality follows from $|x_i - y_i| \geq r_i + r_\ell$, the forth inequality follows from the fact that the right hand side of the previous line decreases with $r_\ell$ and $r_h$, and so it is maximized when $r_\ell = r_h = r_{\min}$, and the fifth inequality follows from the fact that third term inside the parenthesis is a decreasing function for $r_i \geq r_{\min}$.
This completes the proof of Lemma~\ref{lem:scheme-first-bound}.
\end{proof}

\begin{lemma} \label{lem:scheme-bound}
	For an optimal solution $y$ to the barrier problem there is an untangling $\hat{y}$ of $y$ such that there is a $\big((2 +\epsilon) \rho + 2/\epsilon\big)$-balanced charging scheme.
\end{lemma}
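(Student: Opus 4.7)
I will mimic the structure of the proof of Lemma~\ref{lem:scheme-first-bound}, reusing Cases 1--3 verbatim: they already give bounds of $6$, $2\rho+1$, and $2\rho+1$ respectively, each of which is dominated by $(2+\epsilon)\rho + 2/\epsilon$ for any $\epsilon \in (0,1]$. All the real work, therefore, lies in tightening Case 4, where $\ell$ is well defined and intervals $i$ and $\ell$ do not overlap in $y$, and where the previous argument loses a factor of $3\rho$.

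The loss of $3\rho$ in Case 4 is traced to the product $2r_i \cdot |\widetilde{\gamma}(i)|$, bounded via Observations~\ref{obs:length-gamma}--\ref{obs:card-gamma} by roughly $3 r_i\,|x_i-y_i|/r_{\min}$; the constant $3$ reflects that points of $[y_i,x_i]$ can be stabbed twice while points of $[x_i, y_h]$ can be stabbed once. To shave this down to $2+\epsilon$, I propose partitioning $\widetilde{\gamma}(i)$ into a \emph{left} part $\widetilde{\gamma}_L(i) = \{j \in \widetilde{\gamma}(i) : y_j \leq x_i\}$ and a \emph{right} part $\widetilde{\gamma}_R(i) = \{j \in \widetilde{\gamma}(i) : y_j > x_i\}$. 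By Observation~\ref{obs:overlaps-in-gamma} the intervals in $\widetilde{\gamma}_R(i)$ are pairwise disjoint in $y$, their total length is at most $|x_i-y_i|+r_\ell+r_h$, and the $[y_i,x_i]$ region alone supplies the clean coefficient $2$ via Observation~\ref{obs:at-most-two}.

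The refinement is to stop charging $i$ the full $2r_i$ per swap with a \emph{distant} element of $\widetilde{\gamma}_R(i)$. Introduce a threshold $t = \epsilon\, r_i$ and split $\widetilde{\gamma}_R(i)$ further into a \emph{near} part $\{j : y_j \leq x_i + t\}$ and a \emph{far} part $\{j : y_j > x_i + t\}$. Since the near intervals are pairwise disjoint and each has length at least $2r_{\min}$, there are at most $t/(2r_{\min}) = \epsilon\rho/2$ of them; paying the full $2r_i$ for each contributes at most $\epsilon\rho\, r_i$, which is absorbable into $\epsilon\rho\,|x_i-y_i|$ using $|x_i-y_i|\geq r_i+r_\ell \geq r_i$, which holds in Case 4. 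For each far $j$, by Observation~\ref{obs:close-gamma} its own displacement satisfies $|x_j-y_j|\geq y_j-x_i > t = \epsilon r_i$, and so the $2r_i$ portion of the swap cost can be shifted from $i$ to $j$: this adds at most $2r_i \leq (2/\epsilon)\,|x_j-y_j|$ to $\xi(j)$, which is exactly the origin of the additive $2/\epsilon$ in the target ratio. Summing the remaining contributions -- $2|x_i-y_i|$ from $\widetilde{\gamma}_L(i)$ and the near part of $\widetilde{\gamma}_R(i)$, times $r_i/r_{\min}$ when multiplied against $2r_i$ -- produces the principal term $(2+\epsilon)\rho\,|x_i-y_i|$.

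The main obstacle is \emph{verifying that the shifted charges do not compound on any single $j$}. An interval $j$ may appear in $\widetilde{\gamma}_R(i)$ for several values of $i$ (one per crossing), and the cumulative shifted charge has to remain within $((2+\epsilon)\rho+2/\epsilon)\,|x_j-y_j|$. The plan here is twofold: first, order the untangling so that swaps involving the interval of largest radius are performed last, which guarantees that at most one swap per $j$ needs to shift a charge of size $2r_i$; second, combine this with the bound $2r_i \leq (2/\epsilon)|x_j-y_j|$ already established, so that the shifted charge fits inside $j$'s allowance even when $j$ is itself an active interval with its own nontrivial $\widetilde{\gamma}(j)$. Once this structural claim is in place, the remaining arithmetic parallels Case 4 of Lemma~\ref{lem:scheme-first-bound} with the constants improved as described, and Cases 1--3 are quoted unchanged to finish the proof.
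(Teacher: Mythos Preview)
Your proposal captures the paper's key idea: reduce $i$'s charge by shifting the $2r_i$ cost for ``far'' elements of $\widetilde{\gamma}_R(i)$ onto those elements themselves, so that $j$'s own movement $|x_j - y_j|$ absorbs the shifted amount. The paper does exactly this, with one notable difference: its threshold for ``far'' is $|x_i - y_j| \geq \epsilon\,|x_i - y_i|$ rather than your $y_j > x_i + \epsilon\, r_i$. Since in Case~4 one has $|x_i - y_i| \geq r_i + r_\ell \geq r_i$, your threshold carves out a smaller ``near'' region and shifts more charges; either choice is plausible, but the paper's scaling with $|x_i-y_i|$ makes the subsequent accounting more uniform across intervals.

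The substantive gap is your treatment of compounding. You assert that ordering swaps by radius ensures each $j$ receives at most one shifted charge, but this is not justified: the ``far'' relation is defined with respect to the original positions $y$, and a single interval $j$ can lie in the far part of $\widetilde{\gamma}_R(i)$ for many distinct $i$'s simultaneously (take several large-movement intervals $i_1, i_2, \ldots$ all moving left with $x_{i_k}$ sitting in $(x_j, y_j - \epsilon r_{i_k})$). No ordering of the untangling changes this combinatorial fact, so $j$ may receive $\sum_k 2r_{i_k}$ in shifted charges, and your per-charge inequality $2r_{i_k} \leq (2/\epsilon)|x_j - y_j|$ does not bound the sum. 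The paper instead argues that the \emph{total} extra charge received by any interval is at most $\frac{2}{\epsilon}\,|x_i - y_i|$ --- a summation bound exploiting that each charging $j'$ satisfies $|x_i-y_i|\ge \epsilon\,|x_{j'}-y_{j'}|$ --- rather than a cardinality-one bound, and this is precisely the step your sketch is missing. (As a minor aside, your claim that the Case~1 bound of $6$ is dominated by $(2+\epsilon)\rho + 2/\epsilon$ for all $\epsilon\in(0,1]$ fails when $\epsilon$ is near $1$ and $\rho$ is near $1$; this is harmless in the intended regime but should be stated more carefully.)
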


\begin{proof}[Proof sketch]
	The key insight to get this charging scheme is to realize that the intervals $j \in \widetilde{\gamma}(i)$ such that $y_i$ and $y_j$ end up on opposite sides of $x_i$ must have $|x_j - y_j| > 0$, so we can use some of this cost to pay for the distance it moves when swapping places with $i$. If $|x_i - y_j| \geq \epsilon |x_i - y_i|$ then swapping $i$ and $j$ causes $j$ to move $2r_i$, we charge that to $j$ instead of $i$ like before.
	In this modified charging scheme $i$ gets charged $(1- \epsilon) \frac{r_i}{r_{\min}} |x_i - y_i|$ less because it does not pay for the movement of $j \in \widetilde{\gamma}(i)$ with $y_j > x_i (1+\epsilon)$. On the other hand, it has to pay for its own movement when swapped with some $j'$ such that $i \in \widetilde{\gamma}(j')$ and $|x_i - y_i| \geq \epsilon |x_{j'} - y_{j'}|$. However, it can be shown that the total extra charge that an interval $i$ is given is at most $\frac{2}{\epsilon} |x_i - y_i|$. Therefore, the scheme is $\big((2 +\epsilon) \rho + 2/\epsilon\big)$-balanced.
\end{proof}

Selecting $\epsilon$ appropriately gives a minimum approximation of $2(\rho + \sqrt{2 \rho})$.
We conclude this sub-section by showing that our analysis is almost tight.

\begin{lemma}
	\label{lem:rho-gap}
	There is a family of instances where the ratio of the cost of the best order-preserving solution to the cost of the unrestricted optimal solution tends to $\rho$.
\end{lemma}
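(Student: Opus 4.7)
The plan is to exhibit, for any desired $\rho > 1$, a family of instances in which the best order-preserving solution costs nearly $\rho$ times the unrestricted optimum. The construction will use one large sensor $L$ of radius $R$ together with a row of small sensors of radius $r = R/\rho$, arranged so that the initial configuration is nearly optimal: in the unrestricted setting, a single non-order-preserving move (involving $L$ and a single adjacent small sensor) patches the barrier at total cost $O(r)$, but any order-preserving repair forces a cascade whose total movement grows to $(1-o(1))R$.

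Concretely, I would place $L$ so that its initial interval sits just to the right of a tiling of the barrier by small sensor intervals, with a tiny offset so that a small piece of the barrier is uncovered. In the unrestricted optimum, $L$ and a neighbouring small sensor $S$ swap their relative order by each moving a distance of order $r$, which suffices to close the gap; this exploits the fact that swapping an overlapping small and large sensor can be cheap in isolation. I would then compute the cost of this solution and check that it is $O(r)$.

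The main obstacle is the lower bound on the order-preserving cost. Here the plan is to show that once the $x$-ordering of sensors must be respected by the active set, the swap trick is unavailable, and the only ways to cover the offset gap are either (i) $L$ itself moves by a distance of order $R$ to realign with the correct position in the $x$-ordering, or (ii) the small sensors undergo a chain of shifts, each of size $\Theta(r)$, whose total length is forced to be $\Omega(R)$ because the uncovered region and the "slack" in the tiling cannot be bridged without propagating the shift through all small sensors between the defect and the boundary. This case analysis---effectively showing that every valid order-preserving assignment of barrier pieces to sensors produces total movement $(1-o(1))R$---is the heart of the argument and is where I expect the technical work to concentrate.

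Combining the two bounds, the ratio is at least $(1 - o(1))\,R/r = (1-o(1))\rho$; letting the parameters of the construction tend to infinity establishes the lemma. The main risk in this plan is the case analysis in the lower bound: a careful accounting of which sensors are active and how their pieces abut is needed to rule out clever order-preserving solutions that batch some of the movement into $L$ and some into the chain of small sensors.
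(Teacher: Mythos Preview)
Your overall plan---exhibit an explicit family and compare the two costs---is the right shape, but the concrete construction you sketch does not deliver the claimed bounds. The crux is the sentence ``$L$ and a neighbouring small sensor $S$ swap their relative order by each moving a distance of order $r$.'' This is false. If $L$ has radius $R$ and $S$ has radius $r$ and they overlap, their union has length at least $2R$; rearranging them so that the union is preserved but the order of the centers flips moves $L$ by at most $2r$ but moves $S$ by at least $2R-2r$. So a swap of an overlapping large/small pair is \emph{not} cheap for both sensors, and in particular your unrestricted optimum is $\Omega(R)$, not $O(r)$. (If instead you mean a non-union-preserving rearrangement with both movements $O(r)$, then the centers must be within $O(r)$ of each other, $L$'s interval already contains $S$'s, and the $O(r)$ motion only perturbs the covered set by $O(r)$ at each end---so it cannot close any gap that was not already essentially covered by $L$.) Consequently your ratio computation $(1-o(1))R/r$ collapses.

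The paper's construction takes the opposite tack and avoids this difficulty entirely: it does \emph{not} try to make the unrestricted optimum small. The large interval (radius $\rho$, small radii equal to $1$) sits at the \emph{left} end of the barrier, and $\tfrac{L-2\rho}{2}$ unit intervals tile $[0,L-2\rho]$, leaving a gap $[L-2\rho,L]$. The unrestricted optimum ships the large interval all the way to the right, at cost $L-\rho$. For the order-preserving lower bound there is no delicate case analysis: the small intervals have total length $L-2\rho<L$, so the large interval must be active, and since it is leftmost in $x$ it must be leftmost in $y$; hence every small interval must shift right by $2\rho$, for total cost $\rho(L-2\rho)$. Letting $L\to\infty$ gives ratio $\to\rho$. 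In short, both costs are $\Theta(L)$ but with different constants, and the lower bound falls out of a one-line pigeonhole rather than the cascade analysis you anticipated.
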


\begin{figure}
	\centering
	\includegraphics[width=10cm]{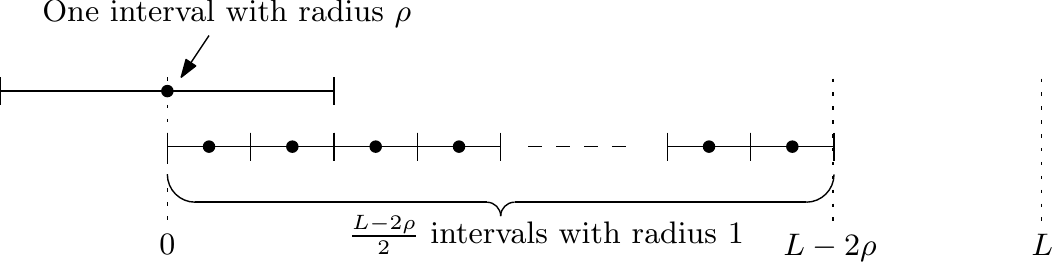}

	\caption{\label{fig:rho_lowerbound} A family of instances showing that order preserving solution cannot guarantee better than $\rho$ approximation. }
\end{figure} %

\begin{proof}
	Consider the instance in Figure~\ref{fig:rho_lowerbound}. There are $\frac{L-2\rho}{2}$ unit-radius intervals covering $[0, L-2\rho]$ and one $\rho$-radius interval covering $[-\rho, \rho]$. The optimal solution moves the long interval $L -\rho$ distance to the right to cover $[L - 2\rho, L]$, at a cost of $L - \rho$. On the other hand, the order-preserving solution involves moving each small interval $2\rho$ units to the right, at a cost of $2\rho \frac{L-2\rho}{2}$. For large enough $L$ the ratio of the cost of these solutions tends to $\rho$.
\end{proof}

As a closing note, we mention that our analysis of the current untangling procedure is nearly tight. Indeed, consider the instance in Figure~\ref{fig:rho_lowerbound2}. The optimal solution moves the long interval $L-\rho$ distance to the right. If there is a small gap between two consecutive small intervals, every interval will be active; therefore, in the untangled solution every small interval is moved a distance of $2\rho$ to the right. This means that the ratio of the cost of the untangled solution to $\opt$ tends to $2 \rho$ as $L$ grows.

\begin{figure}
	\centering
	\includegraphics[width=10cm]{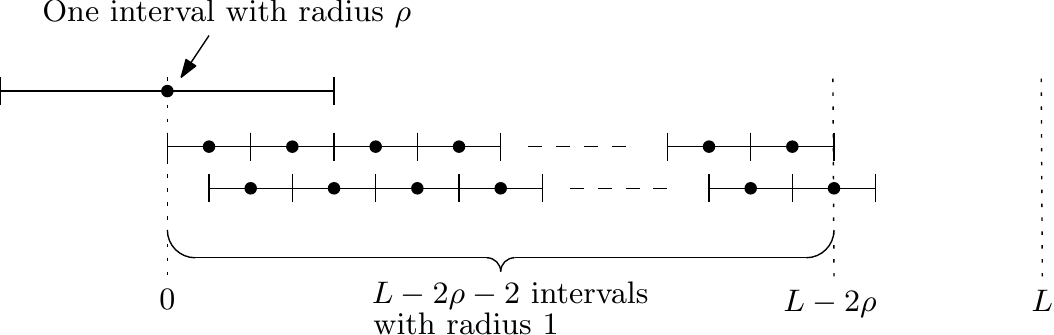}

	\caption{\label{fig:rho_lowerbound2} A family of instances showing that our untangling process can yield solutions that are $2\rho$ away from the optimum. }
\end{figure}

\subsection{Computing Good Order-Preserving Solutions} \label{ssec:algorithm}

First we describe a pseudo-polynomial time algorithm for finding an optimal order-preserving solution. Then we show how to get a $(1+\epsilon)$-approximate order-preserving solution in strongly-polynomial time.

\begin{lemma}
	Assuming the coordinates defining the instance are integral, there is an $O(\opt^2 n )$ time algorithm for computing an optimal order-preserving solution, where $\opt$ is the value of said solution.
\end{lemma}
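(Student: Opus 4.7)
The plan is to build a dynamic programming table indexed by (prefix of sensors, cost budget). Define $f[i][c]$ to be the maximum value $p$ such that $[0,p]$ can be covered in an order-preserving way using some subset of the sensors $\{1,\ldots,i\}$ with total movement cost at most $c$. Since all input coordinates are integral, a standard perturbation argument shows that there is an optimal order-preserving solution in which every active sensor is placed at an integer position, so we may restrict $c$ to integer values in $\{0,1,\ldots,\opt\}$, and coverage endpoints are integer as well. The algorithm returns the smallest integer $c$ for which $f[n][c]\geq L$, and that value equals $\opt$.

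With base case $f[0][c]=0$, the recurrence has two options for each $i\geq 1$. Either we do not use sensor $i$, contributing $f[i-1][c]$; or we do, paying $c-c'$ for some integer $0\leq c'\leq c$ and building on the coverage $p'=f[i-1][c']$. In the latter case we place $y_i$ as far to the right as possible subject to $y_i - r_i \leq p'$ and $|y_i - x_i|\leq c-c'$, namely $y_i=\min(p'+r_i,\, x_i+(c-c'))$, which yields new coverage $\min(p'+2r_i,\, x_i+r_i+(c-c'))$, provided the placement is feasible, i.e.\ $x_i - r_i - (c-c') \leq p'$. Taking the maximum over all valid $c'$ gives $f[i][c]$. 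There are $O(n\cdot \opt)$ states, each evaluated in $O(\opt)$ time by iterating over~$c'$, so the total running time is $O(\opt^2\, n)$.

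The main subtlety is arguing that the minimum $c$ with $f[n][c]\geq L$ is actually the cost of an optimal \emph{order-preserving} cover and not the cost of some possibly cheaper non-order-preserving cover that the DP inadvertently allows. On the one hand, any order-preserving solution, restricted without loss of generality to a minimal active set so that every active sensor extends the coverage, can be simulated by a legal sequence of DP transitions, so $f[n][c^\star]\geq L$ for $c^\star$ equal to the optimal order-preserving cost. On the other hand, consider any sequence of DP activations $j_1<j_2<\cdots<j_m$ with placements $y_{j_1},\ldots,y_{j_m}$ witnessing $f[n][c]\geq L$. If consecutive active sensors are out of order, say $y_{j_{s+1}} < y_{j_s}$, then because $j_{s+1}$ must extend coverage past $p' = y_{j_s}+r_{j_s}$, a short calculation shows that $r_{j_{s+1}} > r_{j_s}+(y_{j_s}-y_{j_{s+1}})$, which forces $j_{s+1}$'s interval to strictly enclose $j_s$'s. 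Dropping $j_s$ from the active set and setting $y_{j_s}=x_{j_s}$ yields an order-preserving witness of strictly smaller cost and at least the same coverage, so the optimum order-preserving cost is at most $c$. Combining both directions gives $c=\opt$, and the algorithm correctly outputs $\opt$ in time $O(\opt^2\, n)$.
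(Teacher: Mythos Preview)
Your proposal is correct and follows essentially the same dynamic program as the paper: both define a table indexed by (prefix of sensors, integer budget) whose entry is the farthest right endpoint coverable within that budget, and both evaluate $O(n\cdot\opt)$ states in $O(\opt)$ time each. You are in fact more careful than the paper in two places: you justify restricting to integral positions (the paper relegates this to an appendix lemma), and you argue explicitly that any DP witness with out-of-order active sensors can be pruned to an order-preserving one of no greater cost---a point the paper's proof leaves implicit. Two minor quibbles: the pruning need not yield \emph{strictly} smaller cost (if $y_{j_s}=x_{j_s}$ already), and a single drop need not make the whole active set order-preserving; both are fixed by replacing ``strictly smaller'' with ``no greater'' and iterating the drop until no inversions remain.
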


\begin{proof}
	Consider the following dynamic programming formulation where we let $T[i,b]$ be the largest value such that there is an order-preserving solution using the intervals $1, \ldots, i$ to cover $[0, T[i,b]]$ having cost at most $b$.
	For $i=0$ there is no active set and so $T[0, b] = 0$ for all $b$.
	For $i > 0$, if $i$ is not part of the active set of the solution that defines $T[i,b]$ then $T[i,b] = T[i-1, b]$.
	For $i > 0$, if $i$ is part of the active set in the optimal solution then we can condition on how much $i$ moves, say $k$ units either to the left or to the right. The most coverage that we can possibly get is to move $i$ to $y_i = T[i-1, b-k] + r_i$, which would allow a cover up to $T[i-1, b-k] + 2r_i$; however, this is only possible if $|T[i-1, b-k] + r_i - x_i| \leq k$.  On the other hand, if $|T[i-1, b-k] + r_i - x_i| > k$ then it must be that $x_i < T[i-1, b-k] + r_i$ (otherwise $k$ needs to be larger) and the best coverage we can get is then $x_i + k$, which should be larger than $T[i-1, b-k]$.
	At this point it is straightforward to write a recurrence for $T[i,b]$ that can be computed in $O(b)$ time given the values for $T[i-1, \ast]$. There are $n \times \opt$ dynamic programming states and each takes $O(\opt)$  time to compute.
\end{proof}

\begin{lemma}
	There is an $O(n^3/\epsilon^2)$ time algorithm for computing a $(1+\epsilon)$-approximate order-preserving solution.
\end{lemma}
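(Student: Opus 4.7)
The plan is to turn the pseudo-polynomial DP from the previous lemma into a strongly polynomial FPTAS by standard objective-scaling, in the spirit of the Ibarra--Kim FPTAS for Knapsack. Throughout, write $\opt$ for the optimum order-preserving cost, as in the previous lemma. The strategy is to rescale the input so that $\opt$ becomes $O(n/\epsilon)$ in the new units, while perturbing every feasible order-preserving solution by at most an additive $O(\epsilon\,\opt)$.

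First I would compute, in polynomial time, an upper bound $V$ with $\opt \leq V \leq c \cdot \opt$ for some absolute constant $c$. A natural candidate is a simple left-to-right greedy that scans intervals in index order and places each needed interval at its cheapest feasible position extending the current coverage; the order-preserving structure is exactly what makes such a greedy amenable to a constant-factor analysis. Having $V$, set $\delta = \epsilon V / (c n)$ and round every input coordinate ($x_i$, $r_i$, and $L$) to the nearest multiple of $\delta$, re-expressing the instance in units of $\delta$ so that every coordinate is a nonnegative integer. Since the rounding perturbs each interval's position by at most $\delta$, the order-preserving cost of the rounded instance differs from that of the original by at most $O(n\delta) = O(\epsilon\,\opt)$; in particular its $\opt$ is bounded by $V/\delta + O(n) = O(n/\epsilon)$.

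Next I would invoke the DP of the previous lemma on the rounded integer instance. Because the lemma gives an $O(\opt^2\, n)$ running time and the scaled $\opt$ is $O(n/\epsilon)$, the total time is $O(n^3/\epsilon^2)$, as required. Unscaling the returned solution, its cost on the original instance is at most $\opt + O(n\delta) = (1 + O(\epsilon))\,\opt$; rescaling $\epsilon$ by a fixed constant at the outset delivers the exact $(1+\epsilon)$ factor.

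The main obstacle is establishing the constant-factor upper bound $V$ in strongly polynomial time. If a direct greedy analysis turns out to be inconvenient, the cleanest alternative is a geometric search for $V$ over powers of two combined with a feasibility version of the DP: this adds only a logarithmic factor, which can be absorbed into the preprocessing and does not affect the dominant $O(n^3/\epsilon^2)$ term once $V$ has been localised within a constant factor.
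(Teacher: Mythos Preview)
Your scaling-plus-DP template is exactly the paper's strategy, but the paper executes it differently and sidesteps a pitfall in your version. Instead of rounding the input coordinates, the paper leaves $x_i$, $r_i$, $L$ untouched and discretises only the \emph{objective}: with $q=\epsilon\,\opt/n$ it optimises $\mathrm{cost}'(y)=\sum_i\lceil|y_i-x_i|/q\rceil$ via the same DP. Because $\mathrm{cost}(y)\le q\cdot\mathrm{cost}'(y)\le\mathrm{cost}(y)+qn$, the optimum of $\mathrm{cost}'$ is $O(n/\epsilon)$, the DP runs in $O(n^3/\epsilon^2)$, and the returned solution is $(1+\epsilon)$-approximate for the true cost. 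Crucially, the feasible set is never touched, so no feasibility argument is needed.

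Rounding the radii is where your sketch has a genuine gap. Perturbing each $r_i$ by $\delta$ changes the feasible region, and the assertion that the optimal order-preserving cost shifts by only $O(n\delta)$ is not justified. If all radii shrink by~$\delta$, a previously tight cover loses $2n\delta$ of total length and can become outright infeasible; even when it stays feasible, closing the $\Theta(n)$ resulting gaps in an order-preserving manner cascades and can cost $\Theta(n^2\delta)$ extra movement rather than $O(n\delta)$. Rounding radii up preserves feasibility in one direction but breaks the conversion back. The clean fix is precisely the paper's: discretise the movements, not the geometry. On the separate matter of needing $\opt$ to set the scale, you are right to flag it and your doubling-search fallback is the standard remedy; the paper simply writes $q=\epsilon\,\opt/n$ without further comment, so on that point your sketch is actually more forthcoming.
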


\begin{proof}
	For $q = \frac{\epsilon \cdot \opt}{n}$ we define the following objective function:
	$
	  \mathrm{cost}'(y) = \sum_{i} \left\lceil{\frac{|y_i - x_i|}{q}}\right\rceil.
	$
	This new cost function is closely related to the original objective, namely:
	$
	  \mathrm{cost}(y) \leq q \cdot \mathrm{cost}'(y) \leq \mathrm{cost}(y) + q n.
	$
	Using the same dynamic formulation as the one used in the pseudo-polynomial time algorithm, we can optimize $\mathrm{cost}'$ in $O(n^3 / \epsilon^2)$ time. Furthermore, the value of this solution under the original objective is at most $(1+\epsilon) \opt$, so the claim follows.
\end{proof}

\section{Inapproximability Results}

The known \NP-hardness proof for the barrier coverage problem~\cite{conf/adhoc/CzyzowiczKKLNOSUY10} is a reduction from 3-{\sc Partition}. The reduction takes an instance of 3-{\sc Partition} and creates an instance of the barrier coverage problem with integral values, $n+1$ different radii values, and $\rho = c n^d$ for some constants $c$ and $d$. Computing a 2-approximate solution in this instance is enough to decide the 3-{\sc Partition} instance. Therefore, there is no 2-approximation unless $\Ptime=\NP$. In fact, the same reduction can be used to obtain inapproximability results in terms of $\rho$.

\begin{theorem} \label{thm:inapproximabilty}
	There is no polynomial time $\rho^{1 - \epsilon}$-approximation algorithm for any constant $\epsilon > 0$ unless $\Ptime=\NP$.
\end{theorem}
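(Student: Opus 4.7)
The plan is to revisit the reduction from \textsc{3-Partition} due to Czyzowicz~\etal{} and re-analyse the gap it produces, showing that it is not merely $2$ but essentially $\rho$. Recall that the reduction takes an instance $(a_1,\dots,a_{3m};B)$ of \textsc{3-Partition} and produces a barrier instance with $n=\Theta(m)$ sensors, $n+1$ distinct radii, and $\rho=cn^d$. First I would extract from the construction an ``intended'' cost $\alpha$: in every YES-instance of \textsc{3-Partition} the sensors realise a valid partition with $\opt\leq\alpha$, where $\alpha$ is polynomial in $r_{\min}$ and~$n$.

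The heart of the argument is a refined lower bound on NO-instances. The reduction uses very large ``wall'' sensors of radius $\Theta(r_{\max})$ whose intended placement separates $m$ slots of width $B$. I would argue that when no valid $3$-partition exists, any feasible solution must either move a wall sensor by $\Omega(r_{\max})$, or else shuttle small sensors across a slot boundary, again incurring total movement $\Omega(r_{\max})$. Hence $\opt \geq c'\, r_{\max}$ for some constant $c'>0$, giving a multiplicative gap $\opt_{\text{NO}}/\opt_{\text{YES}} = \Omega(r_{\max}/(r_{\min}\cdot\mathrm{poly}(n))) = \Omega(\rho/n^{O(1)})$.

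To upgrade this to $\rho^{1-\epsilon}$ hardness I would parameterise the reduction by a padding factor $t=t(\epsilon)$: by enlarging $r_{\max}$ (for instance, making the walls wider and the barrier proportionally longer, without altering the slot widths that encode the numbers) I can set $\rho$ to be any desired polynomial $n^{d(\epsilon)}$, while keeping the YES cost $\alpha$ and the structural NO-instance lower bound unchanged. Choosing $t$ large enough that $\rho/n^{O(1)} \geq \rho^{1-\epsilon}$, a polynomial-time $\rho^{1-\epsilon}$-approximation would distinguish the two cases and decide \textsc{3-Partition}; since \textsc{3-Partition} is strongly \NP-hard and the reduction runs in polynomial time in the unary encoding, this yields $\Ptime=\NP$.

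The main obstacle is making the NO-instance lower bound rigorous, i.e.\ ruling out cleverer ways to repair an invalid partition at $o(r_{\max})$ cost. I would handle this by a case analysis anchored at the walls: any feasible solution induces, for each slot, a sub-multiset of small sensors whose radii must sum to the slot width up to a small movement budget, and the failure of the $3$-partition forces at least one slot to exhibit a unit-sized deficit or surplus. Such a defect can only be eliminated by displacing a wall sensor or by rebalancing small sensors across at least one slot boundary, and both options cost $\Omega(r_{\max})$, completing the gap argument.
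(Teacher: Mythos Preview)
Your proposal is correct and follows essentially the same route as the paper: scale up $r_{\max}$ (the ``walls'') so that $\rho$ becomes an arbitrary polynomial in $n$ while the YES cost stays small, and then choose the scaling so that the resulting gap $\Theta(\rho/\mathrm{poly}(n))$ equals $\rho^{1-\epsilon}$. The only difference is presentational: the paper outsources the gap analysis (your ``NO-instance lower bound'') to the observation already made in Czyzowicz~\etal{} that the reduction can be scaled to yield an instance with $\rho=\alpha\,c n^d$ in which an $\alpha$-approximation decides \textsc{3-Partition}, and then simply solves $\alpha=\rho^{1-\epsilon}$ for $\alpha=(cn^d)^{(1-\epsilon)/\epsilon}$, whereas you propose to rederive that gap from scratch via the wall/slot case analysis.
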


\begin{proof}
	As noted %
  in~\cite{conf/adhoc/CzyzowiczKKLNOSUY10}, a similar reduction can be used to construct an instance with $\rho = \alpha c n^d$ for $\alpha > 1$ such that an $\alpha$-approximation is enough to decide the 3-{\sc Partition} instance. If we set $\alpha = (c n^d)^{\frac{1-\epsilon}{\epsilon}}$ then we get that $\alpha = \rho^{1-\epsilon}$ and the claim follows.
\end{proof}

\section{Parameterized Complexity} \label{sec:hardness}

We show that the barrier coverage problem is hard, even if we only allow a small number of sensors to move.
Formally, we show that the following problem is \W{1}-hard when parameterized by $k$.

\begin{center}
\begin{tabular}{|r p{0.8\columnwidth}|}
\hline
\multicolumn{2}{|l|}{\textsc{$k$-move-Barrier-Coverage}} \\
\textit{Instance:} & Sensors $(x_1,r_1),\dots,(x_n,r_n)$, $L\in\mathbb{R}$, $B\in\mathbb{R}$, and $k\in\mathbb{N}$. \\
\textit{Problem:} & Does there exist a barrier coverage $y$ of interval $[0,L]$ such that $\mathrm{cost}(y)\leq B$ and $\abs{\{i \mid x_i \neq y_i\}}\leq k$?\\
\hline
\end{tabular}
\end{center}
To show \W{1}-hardness, we will reduce from \textsc{Exact-Cover}.
\begin{center}
\begin{tabular}{|r p{0.8\columnwidth}|}
\hline
\multicolumn{2}{|l|}{\textsc{Exact-Cover}} \\
\textit{Instance:} & Universe $U=\{u_1,\dots,u_m\}$, set of subsets $S=\{S_1,\dots,S_n\}\subseteq 2^U$, and $k\in \mathbb{N}$. \\
\textit{Problem:} & Does there exist $T=\{T_1,\dots,T_l\}\subseteq S$ such that $l\leq k$, $\bigcup_{i=1}^l T_i = U$, and $T_i\cap T_j =\emptyset$ for $1\leq i < j \leq l$?\\
\hline
\end{tabular}
\end{center}

A special case of \textsc{Exact-Cover} is the problem \textsc{Perfect-Code}, which was shown to be \W{1}-hard when parameterized by $k$~\cite{DowneyF95} (\W{1}-membership was proved later~\cite{Cesati02}).
Hence, \textsc{Exact-Cover} is \W{1}-hard when parameterized by $k$.
Actually, \W{1}-hardness for \textsc{Perfect-Code} was shown for the case where one asks for a solution of size exactly $k$ and not, as in our problem definition, a solution of size at most~$k$.
However, the proof can easily be adapted to our problem variant.

\begin{theorem}
\textsc{$k$-move-Barrier-Coverage} is \W{1}-hard when parameterized by~$k$.
\end{theorem}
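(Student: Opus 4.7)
The plan is to give a parameterized reduction from \textsc{Exact-Cover}. Given an instance $(U=\{u_1,\dots,u_m\}, S=\{S_1,\dots,S_n\}, k)$, I will build a barrier $[0,L]$ as follows: for each element $u_i$ place a small gap $G_i$ at a designated position $p_i$, and pre-cover the rest of $[0,L]$ with baseline sensors that no reasonable solution would move. For each set $S_j \in S$ I introduce a \emph{set-gadget} consisting of one or more movable sensors, with radii and initial positions chosen so that moving the gadget covers exactly the gaps $\{G_i : u_i \in S_j\}$ at a movement cost of roughly $|S_j| + \varepsilon$. The move-parameter is set to $k' := f(k)$ for a suitably chosen computable function $f$, and the budget to $B := m + k\varepsilon$ for a small $\varepsilon$.

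The forward direction should be immediate: an exact cover $T \subseteq S$ of size $\leq k$ yields a barrier solution by activating precisely the set-gadgets corresponding to the sets in $T$; coverage follows from $\bigcup T = U$, and the total cost is $\sum_{T_j \in T}|T_j| + |T|\varepsilon = m + |T|\varepsilon \leq B$. For the backward direction I will use the budget to enforce disjointness: any feasible barrier solution activates some family $T \subseteq S$ of set-gadgets; coverage forces $\sum_{T_j \in T}|T_j| \geq m$, while the budget $B$ forces $\sum_{T_j\in T} |T_j| \leq m$, so equality holds and $T$ is a pairwise disjoint cover, i.e., an exact cover of size $\leq k$.

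The main obstacle will be geometric: a single sensor covers a contiguous interval, whereas an arbitrary set $S_j$ need not correspond to a contiguous range of element positions $\{p_i : u_i \in S_j\}$. A naive one-sensor-per-set encoding would inevitably over-cover gaps sitting between members of $S_j$ and so spoil the backward direction. The plan to resolve this is to use $|S_j|$ tiny sensors per set, each cheaply movable only onto one specific element gap, so that the movement-cost structure encodes set selection exactly. The most delicate technical point will be keeping the move-parameter $k'$ bounded purely as a function of $k$ rather than depending on $m$ -- this requires carefully coordinating the gadgets so that activating a whole set-gadget corresponds to only a small number of effective ``move units'' (for instance by arranging the sensors of each gadget so their geometric layout forces them to move together as a block), and is where the bulk of the technical work will lie.
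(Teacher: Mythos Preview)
Your proposal correctly identifies the reduction source and the budget trick for enforcing disjointness, but it contains a genuine gap at precisely the point you flag as ``where the bulk of the technical work will lie.'' If each set $S_j$ is encoded by $|S_j|$ separate sensors, then activating an exact cover $T$ moves $\sum_{T_j\in T}|T_j|=m$ sensors, so the move-parameter $k'$ depends on $m$ rather than on $k$ alone. Your suggested fix --- arranging a gadget's sensors so they ``move together as a block'' --- does not help: the problem counts each moved sensor individually, and there is no mechanism in the model that lets one sensor's motion force another's while still being billed as a single move. Without a concrete construction that keeps $k'$ bounded by a function of $k$, the reduction is not a parameterized reduction and the \W{1}-hardness conclusion does not follow.

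The paper avoids the contiguity obstacle by an entirely different, \emph{numerical} rather than geometric, encoding. It uses exactly one sensor per set (so $k'=k$ on the nose) and encodes element membership positionally in base $(n+1)$: sensor $i$ has diameter $\sum_{u_j\in S_i}(n+1)^{j-1}$, the barrier has length $L=\sum_{j=1}^m(n+1)^{j-1}$, and the initial distances encode a second, larger base-$(n+1)$ word in the cost. Because at most $n$ sensors contribute and each ``digit'' is $0$ or $1$, there are no carries; hence the selected diameters sum to exactly $L$ iff every element is hit exactly once, and the analogous cost argument enforces the budget side. There is no attempt to make a single interval geometrically land on a scattered set of gaps --- the intervals simply tile $[0,L]$ end to end, and it is the arithmetic of their lengths that does the combinatorial work.
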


\begin{proof}
We reduce from \textsc{Exact-Cover}.
Let $U=\{u_1,\dots,u_m\}$, $S=\{S_1,\dots,S_n\}\subseteq 2^U$, and $k$ be an instance of \textsc{Exact-Cover}.
We construct an instance $(x_1,r_1)$, $\dots$, $(x_n,r_n)$, $L$ and $B$ for \textsc{$k$-move-Barrier-Coverage} as follows.
For $1\leq i \leq n$ and $1\leq j \leq m$ we define
\[
e_{i,j} = \begin{cases}
  (n+1)^{j-1} & \text{ if } u_j \in S_i, \\
  0 & \text{ otherwise.}
\end{cases}
\qquad\qquad
d_{i,j} = \begin{cases}
  (n+1)^{j+m} & \text{ if } u_j \in S_i, \\
  0 & \text{ otherwise.}
\end{cases}
\]
Our instance consists of intervals having radius $r_i = \frac{1}{2} \sum_{j=1}^m e_{i,j}$ and initial position $x_i= -r_i - \sum_{j=1}^m d_{i,j}$ for $1\leq i \leq n$.
Furthermore, we set $L=\sum_{j=1}^m (n+1)^{j-1}$ and $B=\sum_{j=1}^m (n+1)^{j+m} + k \sum_{j=1}^m (n+1)^{j-1}$.
This reduction can be constructed in polynomial time.
Figure~\ref{fig:exact-cover} shows part of the reduction for a small example instance.

For the correctness, first assume that the \textsc{Exact-Cover} instance is a yes-instance, i.e., there exists $T=\{T_1,\dots,T_l\}\subseteq S$ such that $l\leq k$, $\bigcup_{i=1}^l T_i = U$, and $T_i\cap T_j =\emptyset$ for $1\leq i < j \leq l$.
Let $I\subseteq \{1,\dots,n\}$ be the indices of the intervals corresponding to sets $\{T_1,\dots,T_l\}$.
By construction, $\abs{I} \leq k$.
We have to show that $[0,L]$ can be covered by moving only the intervals identified by $I$ and that this solution has cost at most $B$. %
Since $\bigcup_{i=1}^l T_i = U$, for every $u_j \in U$ there exists exactly one $i\in I$ such that $u_j \in S_i$.
Hence, $\sum_{i\in I} r_i = \frac{1}{2} \sum_{i\in I} \sum_{j=1}^m e_{i,j} = \frac{1}{2} \sum_{j=1}^m (n+1)^{j-1}$.
Therefore, the total length of the selected intervals is exactly $L$ and we can cover $[0,L]$.

Next, we consider the cost of this solution.
Moving all the intervals identified by $I$ to the beginning of the barrier, that is, to position $-r_i$ for interval $i\in I$  results in cost $\sum_{j=1}^m (n+1)^{j+m}$.
Again, the argument is that for every $u_j \in U$ there exists exactly one $i\in I$ such that $u_j \in S_i$.
Hence, $\sum_{i\in I} \abs{-r_i-x_i} = \sum_{i\in I} \sum_{j=1}^m d_{i,j} = \sum_{j=1}^m (n+1)^{j+m}$.
Additionally, the movement of these $k$ intervals to the exact position on $L$ can be bounded by $k L$ resulting in a total cost of at most
$\sum_{j=1}^m (n+1)^{j+m} + k \sum_{j=1}^m (n+1)^{j-1}=B$.

For the reverse direction, assume that there exists a barrier coverage $y$ of interval $[0,L]$ such that $\mathrm{cost}(y)\leq B$ and $\abs{\{i \mid x_i \neq y_i\}}\leq k$.
Let $I\subseteq \{1,\dots,n\}$ be the indices of the moved intervals.
We have to show that $T=\{S_i \mid i \in I\}$ is a solution for the \textsc{Exact-Cover} instance, that is, every element $u\in U$ is contained exactly once in the sets of $T$.
Assume towards a contradiction, that this is not true.
Let $u_c \in U$ be the element with the highest index such that $u_c$ is either not contained in $T$ or it occurs more than once.
Since elements $u_{c+1},\dots,u_m$ occur exactly once, they contribute the length $\sum_{j=c+1}^{m}(n+1)^{j-1}$ towards covering $[0,L]$.
Therefore, $\sum_{j=1}^{c}(n+1)^{j-1}$ remains to be covered. We have two cases:
\begin{itemize}
\item {\bf $u_c$ is not contained in $T$.}
Then the maximum length we can cover is if every element $u_1,\dots,u_{c-1}$ is contained in every moved interval.
Since $n \cdot \sum_{j=1}^{c-1}(n+1)^{j-1} = (n+1)^{c-1} - 1 < \sum_{j=1}^{c}(n+1)^{j-1}$, this is not enough and contradicts our assumption that $y$ is a barrier coverage.
Hence, $u_c$ is contained in $T$.
\item {\bf $u_c$ occurs in multiple moved intervals.}
Since elements $u_{c+1},\dots,u_m$ occur exactly once, they  contribute $\sum_{j=c+1}^{m}(n+1)^{j+m}$ to the total cost just for moving the corresponding intervals to the beginning of the barrier.
Since $u_c$ occurs at least twice, it will contribute $2\cdot (n+1)^{c+m}$ to the total cost just for moving the corresponding intervals to the beginning of the barrier.
But $2(n+1)^{c+m} + \sum_{j=c+1}^{m}(n+1)^{j+m} = (n+1)^{c+m} + \sum_{j=c}^{m}(n+1)^{j+m}$, which is larger than our budget $B$, because
$B \leq \sum_{j=1}^m (n+1)^{j+m} + n \sum_{j=1}^m (n+1)^{j-1} < \sum_{j=0}^m (n+1)^{j+m} = \sum_{j=0}^{c-1} (n+1)^{j+m} + \sum_{j=c}^{m} (n+1)^{j+m}$ and $\sum_{j=0}^{c-1} (n+1)^{j+m} < (n+1)^{c+m}$.
Hence, $u_c$ is contained exactly once in the sets of $T$, which contradicts our assumption.
\end{itemize}
Therefore, $T$ is indeed a solution for the \textsc{Exact-Cover} instance.
\end{proof}

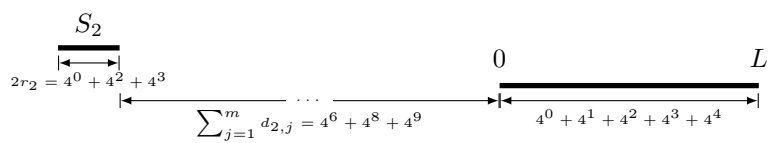
\begin{figure}
\centering
\begin{tikzpicture}

  \draw[line width = 2pt] (-5, 0.5) -- (-5.81, 0.5) node[midway, above]{$S_2$};
  \draw[|<->|, > = latex] (-5, 0.3) -- (-5.81, 0.3) node[midway, below]{\tiny{$2r_2=4^0+4^2+4^3$}};

  \draw[line width = 2pt] (0,0) -- (3.41,0);
  \node[label=$0$] at (0,0) {};
  \node[label=$L$] at (3.41,0) {};
  \draw[|<->|, > = latex] (0,-0.2) -- (3.41,-0.2) node[midway, below]{\tiny{$4^0+4^1+4^2+4^3+4^4$}};

  \draw[|<->|, > = latex] (-5,-0.2) -- (0,-0.2) node[midway,fill=white] {\tiny{$\dots$}} node[midway, below]{\tiny{$\sum_{j=1}^{m} d_{2,j} = 4^6+4^8+4^9$}};
\end{tikzpicture}
    \caption{\label{fig:exact-cover} Part of the reduction from \textsc{Exact-Cover} to \textsc{$k$-move-Barrier-Coverage} for an instance $U=\{u_1,\dots,u_5\}$, $S=\{S_1,S_2,S_3\}$, with $S_2=\{u_1,u_3,u_4\}$.}
\end{figure}

Complementary to this \W{1}-hardness result, we will show next, that the problem is fixed-parameter tractable when parameterized by the budget $B$.
To this end we have to change the problem to restrict the input to integers instead of real numbers.

\begin{center}
\begin{tabular}{|r p{0.8\columnwidth}|}
\hline
\multicolumn{2}{|l|}{\textsc{Barrier-Coverage}} \\
\textit{Instance:} & Sensors $(x_1,r_1),\dots,(x_n,r_n)$ with $x_i,r_i\in \mathbb{N}$ for each $i\in \{1,\dots,n\}$, $L\in\mathbb{N}$, and $B\in\mathbb{N}$. \\
\textit{Problem:} & Does there exist a barrier coverage $y$ of interval $[0,L]$ such that $\mathrm{cost}(y)\leq B$?\\
\hline
\end{tabular}
\end{center}

\begin{theorem}
The \textsc{Barrier-Coverage} problem can be solved in $2^{2B^2(B+1)} \cdot n^{O(1)}$ time. 
\end{theorem}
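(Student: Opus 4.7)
The plan is to construct an FPT algorithm that, after reducing to integer-valued solutions, enumerates at most $2^{2B^2(B+1)}$ candidate ``signatures'' and verifies each in polynomial time.

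First, I would argue that integrality of the input implies the existence of an integer-valued optimal solution: for any optimal $y$, each fractional $y_i$ can be snapped to an integer since binding coverage constraints occur at integer positions (as $x_i$, $r_i$, and $L$ are integers) and the cost $|y_i - x_i|$ is monotone on each side of $x_i$. Consequently, in any solution with $\mathrm{cost}(y) \leq B$, at most $B$ sensors move, each with integer displacement $d_i = y_i - x_i \in \{-B,\ldots,-1,1,\ldots,B\}$.

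Next, I would bound the static coverage structure. The identity $|[x_i-r_i,x_i+r_i] \triangle [y_i-r_i,y_i+r_i]| \leq 2|d_i|$ for each moved sensor, combined with the standard inequality $|\bigcup_i A_i \triangle \bigcup_i B_i| \leq \sum_i |A_i \triangle B_i|$, shows that the symmetric difference between the static cover $U_0 := \bigcup_i [x_i-r_i,x_i+r_i]$ and the solution cover $U_{\mathrm{sol}}$ has total length at most $2B$. Hence the static gaps $G := [0,L] \setminus U_0$ satisfy $|G| \leq 2B$; if $G$ contains more than $2B$ integer points, the algorithm rejects immediately.

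The algorithm then enumerates solution signatures. A signature labels each of roughly $2B^2$ ``positions of interest'' (the integer points in $G$ together with $B$-neighborhoods on each side, so that every possible endpoint of a moved-sensor new interval lies in this set) with a bit vector in $\{0,1\}^{B+1}$ indicating which of the at most $B$ moved-sensor slots covers that position, together with one bit for static coverage. This yields at most $(2^{B+1})^{2B^2} = 2^{2B^2(B+1)}$ signatures.

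The hard part will be the polynomial-time verification per signature. Given a signature, each slot's required coverage interval and displacement budget are fully determined; deciding realizability reduces to bipartite matching between the $\leq B$ slots and input sensors, where sensor $i$ is compatible with slot $j$ iff $r_i$ suffices to realize slot $j$'s interval and $x_i$ lies within the allowed displacement of the slot's required center. The main subtlety is designing the positions of interest tightly enough so that the signature simultaneously encodes the coverage of $G$ and the re-coverage of any $[0,L]$ points that lose their sole static coverer when a sensor moves; this can be handled by augmenting the position set with endpoints of original sensor intervals that overlap the $B$-neighborhood of $G$.
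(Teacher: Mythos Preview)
Your proposal takes a different route from the paper --- the paper gives a recursive branching algorithm, not a one-shot signature enumeration --- and it contains a genuine gap.

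The unjustified step is the claim that ``every possible endpoint of a moved-sensor new interval lies in'' a set of $O(B^2)$ positions near the initial gap set $G$. This fails because moving a sensor to fill $G$ may uncover a region arbitrarily far from $G$, forcing a further sensor (also far from $G$) to move. Concretely, take the two-sensor instance with $L=200$, sensor~$1$ at $x_1=100$, $r_1=99$ (covering $[1,199]$), and sensor~$2$ at $x_2=200$, $r_2=1$ (covering $[199,201]$); the sole initial gap is $G=[0,1]$ and the optimum costs $2$ (shift sensor~$1$ left by one to $[0,198]$, creating a secondary gap $[198,199]$, then shift sensor~$2$ left by one to $[198,200]$). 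Now run your scheme with input budget $B=1$. Your positions of interest are the integers in $[-1,2]$ together with the endpoints of intervals meeting $[-1,2]$, i.e.\ $\{-1,0,1,2,199\}$; sensor~$2$'s interval does not meet $[-1,2]$, and crucially the point $198$ never enters the set. The signature ``slot~$1$ covers $\{0,1,2\}$; position $199$ is statically covered'' is then realized by matching slot~$1$ to sensor~$1$ displaced by $-1$ at cost $1$, and your verification accepts --- yet the interval $(198,199)$ is uncovered. Chaining such long sensors shows that cascades can reach $\Theta(r_{\max})$ away from $G$ after a single unit move, so no position set of size $f(B)$ can suffice; your augmentation rule only pulls in endpoints of intervals touching the $B$-neighbourhood of $G$, not those touching the secondary gaps.

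The paper sidesteps this entirely by recomputing the gaps after every move. It picks a current gap, observes that only sensors whose relevant endpoint lies within distance $B$ of that gap can help and that, among sensors sharing a fixed endpoint, only the $B{+}1$ longest ever need be tried, branches on moving one such sensor to an integer point of the gap, decrements the budget, and recurses on the \emph{new} gap configuration. In the example, the secondary gap $[198,199]$ is discovered at the next recursive call, where sensor~$2$ becomes a legitimate candidate. Recursion depth is at most $B$ since each move costs at least $1$, which yields the stated bound.
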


\begin{proof}
Our algorithm is a branching algorithm, which, for any candidate sensor branches on which integer point in the gaps (empty intervals) to move this sensor to (or leave it at its original position).
The crucial observations will be that we can give a bound on the number of candidate sensors we need to consider to move into the gaps as well as on the positions where they end up in the final configuration, both in terms of the budget $B$.
The sum of the gaps on the barrier is at most $B$, otherwise we have a trivial no-instance.
Given a gap $G$, we only need to consider intervals that are distance $\leq B$ left and right of $G$, since intervals further away cost too much to move them into $G$.
Assume the interval of $G$ is $[y_l,y_r]$.
We consider the range left of $G$, that is $[y_l-B,y_l]$ (the right side is symmetrical).
At each point $p_i$ in $[y_l-B,y_l]$, we consider all the intervals whose right end equals $p_i$, that is intervals $(x_j,r_j)$ with $x_j+r_j = p_i$.
Let $S_i$ denote the set of these intervals.
We would like to branch on which intervals (if any) from $S_i$ move into the gap $G$, but $|S_i|$ is not necessarily bounded by a function of $B$.
Hence, we sort the intervals in $S_i$ by length and consider only the $B+1$ longest ones.
This is sound, since our budget allows us to move at most $B$ intervals and additionally, an interval from $S_i$ might need to remain stationary in order to cover $p_i$.
Assume there exists an optimal solution in which interval $(x_j,r_j)\in S_i$ is moved to position $y_j \ne x_j$ and $(x_j,r_j)$ is not among the top $B+1$ longest ones.
Then at most $B-1$ of the longest intervals in $S_i$ where moved.
This leaves at least two remaining intervals among the $B+1$ many.
Assume $(x_k,r_k)$ is the shorter one of those two.
Moving $(x_k,r_k)$ the same distance to the right as $(x_j,r_j)$ was moved, covers everything $(x_j,r_j)$ was covering and has the same cost.
Additionally, $[x_k-r_k,x_k+r_k]$ is still covered by the longer interval which we did not move.
Hence, to conclude, we need to consider at most $B+1$ intervals for each of the $B$ points left and right of a gap.

The only thing remaining, is to show that it suffices to consider integer points for the solution.
\longversion{By Lemma~\ref{lem:integer-solution} in the appendix, this is indeed the case.}%
\shortversion{The proof of this is deferred to the long version~\cite{arxiv}}.
Therefore, for our branching algorithm, the total number of intervals to consider is bounded by $B$ and their possible new positions is bounded by the budget $B$ as well, which leads to fixed-parameter tractability in $B$ because $B$ decreases by at least one in each recursive call.
\end{proof}

\section{Conclusion}
We showed a $((2+\varepsilon)\rho+2/\varepsilon)$-approximation for the barrier coverage problem for the case when the sensors initially are on a line containing the barrier.
This works well when the ratio between the largest radius and the smallest radius is small, but in theory the difference could be arbitrarily large.
However, we also proved that no polynomial time $\rho^{1-\varepsilon}$-approxi\-mation algorithm exists unless $\Ptime=\NP$.
There are still several open problems for this special case that would be interesting to pursue.

\begin{enumerate}
	\item Improve the approximation ratio analysis of an order-preserving solution. Ideally, down to $\rho + O(1)$.
	\item Determine if the problem is fixed-parameter tractable for parameter $k$ when the interval radii are $1, 2, \ldots, k$.
	\item Approximate the weighted version where each interval has a weight and we want to minimize $\sum_i w_i |x_i - y_i|$.
\end{enumerate}

\longversion{

\appendix
\newpage

\section{Missing proofs}

\begin{proof}[Proof complement of Lemma \ref{lem:scheme-first-bound}: Cases 2 and 3]
	We will now show that $\frac{\xi(i)}{|x_i-y_i|} \leq 2 \rho + 1$ when either $\ell$ is not well-defined (Case 2) or $\ell$ is well-defined and intervals $\ell$ and $i$ overlap in $y$ (Case 3), complementing the proof that the charging scheme is $(3 \rho + 4)$-balanced.
	
\paragraph*{Case 2: $\ell$ is not well-defined.}
Assume $i$ a $h(i)$ do not overlap, otherwise we are in Case~1. This means $|x_i - y_i| \geq r_h$. If $\widetilde{\gamma}(i) = \set{h}$, it is easy to see that $ \frac{\xi(i)}{|x_i - y_i|} \leq \frac{2r_i + 2 r_h}{r_h} \leq \rho + 1$, so let us assume the stronger property that $\abs{\widetilde{\gamma}(i)} \geq 2$.

Since $\ell$ is not well-defined,  there are no intervals in $\widetilde{\gamma}(i)$ that lie (in $y$) between $x_i$ and $y_i$. It follows that there are at least two intervals in $\widetilde{\gamma}(i)$ that lie on the side of $x_i$ opposite to $y_i$; let $j$ be interval with  $y_j$ closest to $x_i$. By Observation~\ref{obs:overlaps-in-gamma}, $j$ and $h$ and all the other intervals in between them cannot overlap. Using this fact, we can derive two inequalities:
$$
  |x_i - y_i| \geq r_j + r_h + 2 r_{\min} (|\widetilde{\gamma}(i)| - 2) \geq 2 r_{\min} (|\widetilde{\gamma}(i)| - 1)\text{, \quad and}
$$
$$
  \sum_{j \in \widetilde{\gamma}(i)} 2r_j \leq r_j + |x_i - y_i| + r_h.
$$

Therefore, using~\eqref{eq:xi-left}~and~\eqref{eq:xi-right} we get
\begin{align*}
  \frac{\xi(i)}{|x_i-y_i|}
  & \leq \frac{2 |\widetilde{\gamma}(i)| r_i + \max \{ |x_i - y_i|, \sum_{j \in \widetilde{\gamma}(i)} 2r_j - |x_i-y_i| \}}{|x_i - y_i|} \\
  & \leq \frac{2 |\widetilde{\gamma}(i)| r_i}{|x_i - y_i|} + \max \left\{ 1, \frac{ r_j + r_h}{|x_i - y_i|} \right\} \\
  & \leq \frac{2 |\widetilde{\gamma}(i)| r_i}{2r_{\min} (|\widetilde{\gamma}(i)| - 1)} + \max \left\{ 1, \frac{ r_j + r_h}{ r_j + r_h} \right\} \\
  & \leq \frac{|\widetilde{\gamma}(i)| \rho}{ (|\widetilde{\gamma}(i)| - 1)} + \max \left\{ 1, \frac{ r_j + r_h}{ r_j + r_h} \right\} \\
  & \leq 2 \rho + 1,
\end{align*}
where the second to last inequality follows from the fact that the previous expression is maximized when $r_i = r_{\max}$, and the last inequality, from the fact that the previous expression is a decreasing function of $\abs{\widetilde{\gamma}(i)}$, so the maximum value is attained at $\abs{\widetilde{\gamma}(i)}=2$.
Therefore, the charging scheme so far is $(3 \rho + 4)$-balanced.

\begin{figure}[t]
\centering
  \includegraphics[width=6cm]{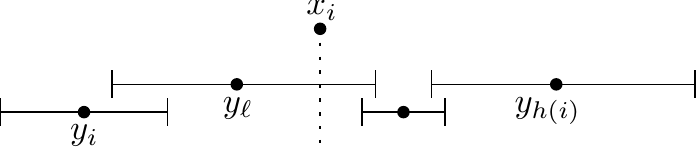}
\caption{
  \label{fig:case3}
      If $\ell$ is well-defined and overlaps $i$ then there can be no other sensor from $\widetilde{\gamma}(i)$ can lie on the same side of $x_i$ as $y_i$.}
\end{figure}

\paragraph*{Case 3: $\ell$ is well-defined and intervals $\ell$ and $i$ overlap in $y$.} Since $\ell$ is well-defined and overlaps $i$, there can be no other sensor from $\widetilde{\gamma}(i)$ can lie (in $y$) on the same side of $x_i$ as $y_i$, see Fig.~\ref{fig:case3}.

Notice that because $\ell$ exists, the interval $i$ cannot overlap any interval that lies (in $y$) to right of $\ell$. Using this fact, we can derive the following inequality:
$$
  2|x_i - y_i| \geq r_i + \sum_{\mathclap{j \in \widetilde{\gamma}(i) \setminus \set{\ell, h}}} 2r_j + r_h
$$

Therefore, if we are under the regime of~\eqref{eq:xi-left}~then we have
\begin{align*}
  \frac{\xi(i)}{|x_i-y_i|}
  & \leq \frac{2 |\widetilde{\gamma}(i)| r_i + |x_i - y_i| }{|x_i - y_i|} \\
  & \leq \frac{4 |\widetilde{\gamma}(i)| r_i}{r_i + 2r_{\min} (|\widetilde{\gamma}(i)| - 2) + r_h} + 1 \\
  & \leq \frac{4 |\widetilde{\gamma}(i)| \rho}{\rho + 2 |\widetilde{\gamma}(i)| - 1} + 1 \\
  & \leq 2 \rho + 1, \\
\end{align*}
where the second to last inequality follows from the fact that the previous expression is maximized when $r_i = r_{\max}$ and $r_h = r_{\min}$, and the last inequality, from the fact that the previoius expression increases as $\abs{\widetilde{\gamma}(i)}$ increases, so the maximum is attained when  $\abs{\widetilde{\gamma}(i)} \rightarrow \infty$.

Finally, if we are under the regime of~\eqref{eq:xi-right} then we have
\begin{align*}
  \frac{\xi(i)}{|x_i-y_i|}
  & \leq \frac{2 |\widetilde{\gamma}(i)| r_i + \sum_{j \in \widetilde{\gamma}(i)} 2r_j - |x_i - y_i| }{|x_i - y_i|} \\
  & \leq \frac{4 |\widetilde{\gamma}(i)| r_i + 2 \sum_{j \in \widetilde{\gamma}(i)} 2r_j}{r_i + \sum_{j \in \widetilde{\gamma}(i) \setminus \set{\ell, h}} 2r_j + r_h} - 1 \\
  & \leq \frac{(4 |\widetilde{\gamma}(i)|-2) r_i + 4 r_\ell + 2r_h+ 2 \left( r_i + \sum_{j \in \widetilde{\gamma}(i) \setminus \set{\ell, h}} 2r_j +r_h\right)}{r_i + \sum_{j \in \widetilde{\gamma}(i) \setminus \set{\ell, h}} 2r_j + r_h} - 1 \\
  & \leq \frac{(4 |\widetilde{\gamma}(i)|-2) r_i + 4 r_\ell + 2r_h}{r_i + \sum_{j \in \widetilde{\gamma}(i) \setminus \set{\ell, h}} 2r_j + r_h} + 1 \\
  & \leq \frac{(4 |\widetilde{\gamma}(i)|+2) \rho + 2}{\rho + 2 \abs{\widetilde{\gamma}(i)} -3} + 1 \\
  & \leq 2 \rho + 3, \\
\end{align*}
where the second to last inequality follows from the fact that the previous expression is maximized when $r_i = r_\ell = r_{\max}$ and the remaining intervals have radius $r_{\min}$, and the last expression increases as a function of $|\widetilde{\gamma}(i)|$ for $|\widetilde{\gamma}(i)|\geq 4$.

Therefore, the charging scheme so far is $(3 \rho + 4)$-balanced.
\end{proof}

\begin{lemma}\label{lem:integer-solution}
	There is an optimal solution $y$ for the \textsc{Barrier-Coverage} problem where each new sensor position $y_i\in y$ is an integer.
\end{lemma}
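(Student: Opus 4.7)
The plan is to start from an arbitrary optimal solution $y^*$ of \textsc{Barrier-Coverage} and convert it into an integer-valued optimal solution via a linear programming argument.

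First, I would fix the combinatorial structure of $y^*$. Let $S\subseteq\{1,\ldots,n\}$ be a minimal active set of $y^*$ and let $i_1,\ldots,i_k$ be its elements ordered by the values $y^*_{i_j}$. For $i\notin S$ I may assume $y^*_i=x_i$, which is already integer; otherwise such a non-active interval is displaced at strictly positive cost, and moving it back to $x_i$ decreases the total cost without affecting feasibility, contradicting the optimality of $y^*$.

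Second, I would consider the linear program obtained by freezing the active set $S$ and introducing slack variables $z_i\geq 0$ for $i\in S$:
\begin{align*}
\min\quad & \sum_{i\in S} z_i \\
\text{s.t.}\quad & z_i\geq y_i-x_i,\ z_i\geq x_i-y_i, \quad i\in S, \\
& y_{i_1}\leq r_{i_1},\quad y_{i_k}\geq L-r_{i_k}, \\
& y_{i_{j+1}}-y_{i_j}\leq r_{i_j}+r_{i_{j+1}}, \quad j=1,\ldots,k-1.
\end{align*}
The three groups of feasibility constraints express precisely that the intervals at positions $y_{i_1},\ldots,y_{i_k}$ cover $[0,L]$, and every right-hand side is an integer since $x_i,r_i,L\in\mathbb{N}$. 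The pair $(y^*,z^*)$ with $z^*_i=|y^*_i-x_i|$ is feasible and optimal for this LP, so some vertex $(\tilde y,\tilde z)$ is also optimal and yields a feasible original solution of cost at most $\mathrm{cost}(y^*)$.

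Third, I would show that $\tilde y_i\in\mathbb{Z}$ for every $i\in S$. A vertex has $2k$ linearly independent tight constraints among the $2k$ variables. For each $i$ at least one $z_i$-constraint is tight; if both are, then $\tilde y_i=x_i\in\mathbb{Z}$. Writing $a$ for the number of indices at which exactly one $z_i$-constraint is tight, a simple count forces exactly $a$ boundary or overlap constraints (all in the $y$-variables alone, and with integer right-hand sides) to be tight. Form a graph on $\{i_1,\ldots,i_k\}$ whose edges are these tight overlap constraints; each connected component is a sub-path of $i_1{-}i_2{-}\cdots{-}i_k$. I claim every component contains at least one integer ``pin'': either a tight boundary equality $\tilde y_{i_1}=r_{i_1}$ or $\tilde y_{i_k}=L-r_{i_k}$, or some index $i$ in the component with $\tilde y_i=x_i$. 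Given such a pin, propagation along the tight overlap equations $\tilde y_{i_{j+1}}-\tilde y_{i_j}=r_{i_j}+r_{i_{j+1}}$ (all with integer right-hand side) makes every $\tilde y_{i_j}$ in the component integer.

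The main obstacle is the pin-existence claim. A component violating it would consist entirely of indices with only one tight $z_i$-constraint and no tight boundary; then shifting all its $y$-coordinates by a common $\delta$, and updating each corresponding $z_i$ by $\pm\delta$ to preserve the unique tight $z$-constraint, produces a feasible perturbation for all sufficiently small $|\delta|$ along which every previously tight constraint stays tight, contradicting the vertex property of $(\tilde y,\tilde z)$.
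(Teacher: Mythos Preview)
Your proof is correct but takes a genuinely different route from the paper. The paper gives a short, elementary shifting argument: starting from any optimal solution with $f>0$ non-integral positions, it moves \emph{all} non-integral sensors simultaneously either $\epsilon_\ell$ to the left or $\epsilon_r$ to the right (whichever does not increase the cost; one of the two works since at least half the fractional sensors benefit from one direction), thereby making at least one more position integral while preserving coverage, and then inducts on $f$. Your approach instead freezes the combinatorial structure of a minimal active set, writes coverage as a totally-integer-data LP, and argues that an optimal vertex is integral via a pin-and-propagate argument on the path of tight overlap constraints. Both arguments are valid; the paper's is shorter and needs no LP machinery, while yours is more structural and would generalise more readily (e.g., to settings where one wants additional integral structure or where the constraint matrix has a nice form). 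One point worth making explicit in your write-up: the fact that $y^*$ is feasible for your LP---in particular, that $y^*_{i_1}\le r_{i_1}$, $y^*_{i_k}\ge L-r_{i_k}$, and that consecutive-by-center active intervals overlap---relies on $S$ being a \emph{minimal} active set, not just any active set; with non-uniform radii, the sensor with leftmost center need not have the leftmost left endpoint unless minimality rules out containment. You used minimality only to send non-active sensors back to $x_i$, but it is doing real work here as well.
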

\begin{proof} %
	We will show that any optimal solution can be converted into an optimal solution with no sensors at non-integral positions.
	The proof is by induction on the number $f$ of sensors at non-integral positions in an optimal solution $y$.
	
	If $f=0$ we are done.
	Suppose that if there is an optimal solution with at most $f-1$ sensors at non-integral positions, then there is an optimal solution with no sensors at non-integral positions.
	Let $\epsilon_l$ be the smallest distance that any non-integral sensor has to move to the left to become integral.
	Let $\epsilon_r$ be the smallest distance that any non-integral sensor has to move to the right to become integral.
	Among the non-integral sensors, either at least half of them have their movement cost reduced by moving to the left or more than half of them have their movement cost reduced by moving to the right.
	Therefore, consider the following two solutions with at most $f-1$ non-integral sensors: in the first one, all non-integral sensors are moved a distance of $\epsilon_l$ to the left, and in the second one, all non-integral sensors are moved a distance of $\epsilon_r$ to the right.
	At least one of these two solutions has cost at most $\textrm{cost}(y)$. Moreover, it is easy to see that the barrier is covered in both solutions. Therefore, by our induction hypothesis, there is an optimal solution with no sensors at non-integral positions.
\end{proof}

} %


\begin{thebibliography}{10}

\bibitem{AndrewsW17}
A.~M. Andrews and H.~Wang.
\newblock Minimizing the aggregate movements for interval coverage.
\newblock {\em Algorithmica}, 78(1):47--85, 2017.
\newblock \href {http://dx.doi.org/10.1007/s00453-016-0153-8}
  {\path{doi:10.1007/s00453-016-0153-8}}.

\bibitem{AroraEtal-exscal-05}
Anish Arora, R.~Ramnath, E.~Ertin, P.~Sinha, S.~Bapat, V.~Naik, V.~Kulathumani,
  H.~Zhang, H.~Cao, M.~Sridharan, S.~Kumar, N.~Seddon, C.~Anderson, T.~Herman,
  N.~Trivedi, C.~Zhang, M.~Nesterenko, R.~Shah, S.~S. Kulkarni, M.~Aramugam,
  L.~Wang, M.~G. Gouda, Y.{-}R. Choi, D.~E. Culler, P.~Dutta, C.~Sharp,
  G.~Tolle, M.~Grimmer, B.~Ferriera, and K.~Parker.
\newblock Exscal: Elements of an extreme scale wireless sensor network.
\newblock In {\em Proceedings of the 11th {IEEE} International Conference on
  Embedded and Real-Time Computing Systems and Applications (RTCSA)}, pages
  102--108, 2005.

\bibitem{Cesati02}
M.~Cesati.
\newblock Perfect code is {W[1]}-complete.
\newblock {\em Inform. Process. Lett.}, 81(3):163--168, 2002.

\bibitem{ckl-lbcws-10}
A.~Chen, S.~Kumar, and T.{-}H. Lai.
\newblock Local barrier coverage in wireless sensor networks.
\newblock {\em {IEEE} Transactions on Mobile Computing}, 9(4):491--504, 2010.

\bibitem{cglw-ammsm-13}
D.~Z. Chen, Y.~Gu, J.~Li, and H.~Wang.
\newblock Algorithms on minimizing the maximum sensor movement for barrier
  coverage of a linear domain.
\newblock {\em Discrete {\&} Computational Geometry}, 50(2):374--408, 2013.

\bibitem{conf/adhoc/CzyzowiczKKLNOSUY10}
J.~Czyzowicz, E.~Kranakis, D.~Krizanc, I.~Lambadaris, L.~Narayanan, J.~Opatrny,
  L.~Stacho, J.~Urrutia, and M.~Yazdani.
\newblock On minimizing the sum of sensor movements for barrier coverage of a
  line segment.
\newblock In {\em Proceedings of the 9th International Conference Ad-HocMobile
  and Wireless Networks (ADHOC-NOW)}, pages 29--42, 2009.

\bibitem{Dobrev-cbcrs-15}
S.~Dobrev, S.~Durocher, M.~Eftekhari, K.~Georgiou, E.~Kranakis, D.~Krizanc,
  L.~Narayanan, J.~Opatrny, S.~Shende, and J.~Urrutia.
\newblock Complexity of barrier coverage with relocatable sensors in the plane.
\newblock {\em Theoretical Computer Science}, 579:64 -- 73, 2015.

\bibitem{DowneyF95}
R.~G. Downey and M.~R. Fellows.
\newblock Fixed-parameter tractability and completeness {II:} on completeness
  for {W[1]}.
\newblock {\em Theoretical Computer Science}, 141(1{\&}2):109--131, 1995.

\bibitem{kla-bcws-05}
S.~Kumar, T.{-}H. Lai, and A.~Arora.
\newblock Barrier coverage with wireless sensors.
\newblock In {\em Proc. of the 11th International Conference on Mobile
  Computing and Networking}, pages 284--298, 2005.

\bibitem{Mehrandish-11}
M.~Mehrandish, L.~Narayanan, and J.~Opatrny.
\newblock Minimizing the number of sensors moved on line barriers.
\newblock In {\em IEEE Wireless Communications and Networking Conference},
  pages 653--658, 2011.

\bibitem{tw-sbcds-14}
D.~Tao and T.~Y. Wu.
\newblock A survey on barrier coverage problem in directional sensor networks.
\newblock {\em IEEE Sensors Journal}, 15(2):876--885, 2015.

\bibitem{wgwgc-sbcs-16}
F.~Wu, Y.~Gui, Z.~Wang, X.~Gao, and G.~Chen.
\newblock A survey on barrier coverage with sensors.
\newblock {\em Frontiers of Computer Science}, 10(6):968--984, 2016.

\end{thebibliography}
\end{document}